\documentclass[12pt]{article}
\usepackage{amsfonts,amsmath}
\usepackage[mathscr]{eucal}
\usepackage{amssymb}
\usepackage{cite}
\usepackage{bm}
\usepackage{bbold}
\usepackage{amsthm}
\usepackage{graphicx}
\usepackage{epsf}
\theoremstyle{plain}
\newtheorem{thm}{Theorem}

\newtheorem{rem}{Remark}

\textheight=23.5cm \textwidth=17cm \topmargin=-1.5cm
\oddsidemargin=-0.3cm \tabcolsep=10mm

\def\theequation{\arabic{section}.\arabic{equation}}
\newcommand{\be}{\begin{eqnarray}}
\newcommand{\ee}{\end{eqnarray}}
\newcommand{\nn}{\nonumber \\}
\newcommand{\lb}{\label}
\newcommand{\p}[1]{(\ref{#1})}

\newcommand{\ga}{\lower.7ex\hbox{$
\;\stackrel{\textstyle>}{\sim}\;$}}
\newcommand{\vecg}[1]{\mbox{\boldmath $#1$}}
\newcommand{\la}{\lower.7ex\hbox{$
\;\stackrel{\textstyle<}{\sim}\;$}}

\begin{document}

\begin{titlepage}

\vspace*{0.2cm}

\renewcommand{\thefootnote}{\star}
\begin{center}

{\LARGE\bf  Classical and Quantum Dynamics of Higher-Derivative Systems}\\

\vspace{0.5cm}

\vspace{1.5cm}
\renewcommand{\thefootnote}{$\star$}

\quad {\large\bf Andrei~Smilga} 
 \vspace{0.5cm}

{\it SUBATECH, Universit\'e de Nantes,}\\
{\it 4 rue Alfred Kastler, BP 20722, Nantes 44307, France;}\\
\vspace{0.1cm}

{\tt smilga@subatech.in2p3.fr}\\

\end{center}
\vspace{0.2cm} \vskip 0.6truecm \nopagebreak

\begin{abstract}
\noindent A brief review of the physics of systems including higher derivatives in the Lagrangian is given. All such systems involve {\it ghosts}, i.e. the spectrum of the Hamiltonian is not bounded from below and the vacuum ground state is absent.
Usually this leads to collapse and loss of unitarity. In certain special cases, this does not happen, however: ghosts are {\it benign}.

We speculate that the Theory of Everything is a higher-derivative field theory, characterized by the presence of such benign ghosts and defined in a higher-dimensional bulk. Our Universe represents then a classical solution in this theory, having the form
of a 3-brane embedded in the bulk. 
\end{abstract}

\vspace{1cm}
\bigskip

\newpage

\end{titlepage}

\setcounter{footnote}{0}

\setcounter{equation}0
\section{Motivation}
In the vast majority of problems in theoretical mechanics and theoretical physics, the Lagrangians
depend on generalized coordinates and generalized velocities, but not on generalized accelerations or still higher derivatives of dynamic variables. In particular, this is true for the Lagrangian of the Standard Model.

However, having unravelled by $\sim 1975$ the underlying structure of matter at the scale $\ga
10^{-17}$ cm, the theorists met a major impasse: the attempts to construct quantum theory of gravity have not been successful so far. There are two main reasons for that.
\begin{enumerate}
\item A direct attempt to quantize Eisnstein's gravity gives a nonrenormalizable quantum theory. Such a theory does not make sense not only because of the impossibility of perturbative calculations, but also due to impossibility to define path integrals: in a nonrenormalizable theory, the lattice path integral has no continuum limit.
\item In any gravity theory, time and spatial coordinates are intertwined so that the classical equations of motion do not represent a Cauchy problem --- the evolution in independent flat time. As a result, causality is lost. Einstein's equations admit strange G\"odel solutions with closed time loops \cite{Goedel}. Thus, even the classical general relativity has fundamental difficulties whose resolution is not in sight.
And things do not become more assuring when one tries to quantize it.
\end{enumerate}

Today the prevailing  opinion of the experts is that the Theory of Everything that includes quantum gravity is superstring theory of some kind. However, the latter has its own serious difficulties. 

\begin{enumerate}
\item Its building is impressive and beautiful, but the lowest stories of this building are hid in mist.
String theory is simply not formulated at the fundamental nonperturbative level.
\item More that 30 years have now passed since the superstring revolution of 1985, but string theory still cannot boast of phenomenological successes. It has not contributed much in our understanding of why the world we see is this and not that.
 \end{enumerate}
 
 Bearing all this in mind, it is interesting to pursue as far as we can an alternative, more conservative line of reasoning, assuming  that the TOE is not a string theory, but a conventional field theory living in flat
 space with universal flat time.
 
 \begin{itemize}
 \item The first question that one should be able to answer in this approach is how to explain
 the observed fact that the space-time we live in is curved. And the only way to do so is to assume
 that our (3+1)--dimensional Universe represents a thin curved film embedded in a flat higher-dimensional bulk, like a 2-dimensional soap bubble is embedded in flat 3-dimensional space.
 
 The fundamental TOE should be formulated in this bulk, and our Universe should represent a classical solution in this theory, a kind of Abrikosov string, but  extended not in one, but in three spatial directions, and very thin in the transverse directions so that the latter cannot be perceived.
 
 \item Then the question is how this higher-dimensional theory can look like. The problem is that we
 cannot write any Lagrangian, familiar from 4-dimensional physics, like
  \be
 {\cal L} \ =\ -\frac 1{2g^2} {\rm Tr} \{ F_{\mu\nu} F_{\mu\nu} \} \, .
   \ee
  We cannot do that because at $D>4$ the coupling constant $g$ acquires dimension, and the theory is no longer renormalizable. 
  
  What one can try to do is to add extra derivatives. For example, the 6-dimensional Lagrangian
   \be
   \lb{LD=6}
   {\cal L}^{D=6} \ = \ \alpha  {\rm Tr} \{ F_{\mu\nu} \Box F_{\mu\nu} \} + 
   \beta {\rm Tr} \{ F_{\mu\nu} F_{\nu\alpha}F_{\alpha\mu} \}
    \ee
    involves dimensionless coupling constants $\alpha, \beta$ and is renormalizable.
    
    \item We thus arrive at a higher-derivative (HD) theory, and that was my own main motivation to study such theories --- I tend to believe that the Holy Grail TOE is a HD theory living in a higher-dimensional bulk, with our Universe representing a curved 3-brane embedded there \cite{TOE}.
    \footnote{Unfortunately, we have now absolutely no idea {\it what} this theory might be.}
    
     But we should say right now that all such theories have a common striking feature --- they are
     {\it ghost-ridden}. This means that the quantum Hamiltonian of any such theory does not have a ground state and its spectrum involves the levels with arbitrarily low energies. Neither has this Hamiltonian a ``sky state'' ---  the spectrum also involves the levels with arbitrarily high energies.

 \end{itemize}  

For many years people thought that such theories are inherently  sick and did not study them. Indeed, the absence of the vacuum ground state is a very strange unusual feature. And in many cases such theories are sick, indeed. They involve a {\it collapse} leading to violation of unitarity. But it has become clear recently that a collapse and violation of unitarity is a feature of {\it many}, but not of
{\it all} ghost-ridden theories. There are quantum mechanical and field theory systems with ``benign
ghosts'' --- their spectrum is bottomless and there is no vacuum state, but still the quantum problem is well posed, there is no collapse and the evolution operator is unitary.

 Unfortunately, there have been much confusion in the literature devoted to this problem. But correct understanding of this  issue is now being gradually achieved. The main purpose of this mini-review is to clarify the remaining confusions and to explain what is known now about the physics of HD systems in simple terms.

 \section{Ostrogradsky
  Hamiltonian and its properties}
\setcounter{equation}0

 The first study of HD systems is due to a Russian mathematician Mikhail Ostrogradsky
 \footnote{He is known also by Ostrogradsky's theorem of vector analysis and by Ostrogradsky's method of calculating the integrals $\int dx \,  P(x)/Q(x)$ for the ratios of two polynomials.}  who developed back in 1848 the Hamiltonian method in classical mechanics independently of William Hamilton
 \footnote{Ostrogradsky did not read the splendid work of Hamilton of 1835 \cite{Hamilton} --- it was written in English and in those elder days it was French rather than English that played the role
 of the international language of science.}
 and applied it to the ordinary and also to HD dynamical systems \cite{Ostro}. 
 
 Let us explain what Ostrogradsky did, using the modern notation. Consider first the simplest nontrivial
 case when the Lagrangian of the system, 
  \be
  \lb{Lagrx2}
  L(x, \dot{x}. \ddot{x}) \, ,
  \ee
 depends on a single generalized coordinate $x$, the generalized velocity $v = \dot{x}$ and the generalized acceleration $a = \ddot{x}$. 
 The least action principle gives the equation of motion
  \be
  \lb{Lagr-eqmot-2}
  \frac {d^2}{dt^2} \left( \frac {\partial L}{\partial \ddot{x}} \right) - 
  \frac {d}{dt} \left( \frac {\partial L}{\partial \dot{x}} \right) + \frac {\partial L}{\partial x} \ =\ 0 \, .
   \ee
   The equation \p{Lagr-eqmot-2} is of order 4 and to solve it one has to specify four initial values
   $x(0), \dot{x}(0), \ddot{x}(0), x^{(3)}(0)$. This implies the existence of two pairs of canonical
   variables in the Hamiltonian description. 
   
   The Lagrangian \p{Lagrx2} does not depend explicitly on time, which entails energy conservation.
   Indeed, one can check that the quantity 
     \be
     \lb{energy}
     E  \ =\ \ddot{x}  \frac {\partial L}{\partial \ddot{x}} + \dot{x} \left(  \frac {\partial L}{\partial \dot{x} } -   \frac {d}{dt}  \frac {\partial L}{\partial \ddot{x}} \right) - L 
       \ee
       is an integral of motion, $dE/dt = 0$. 
       
       To construct the Hamiltonian, treat 
       \be
       \lb{vdef}
       v = \dot{x}
        \ee
        as an independent variable, define the canonical momenta
        \be
        \lb{mom-def-2}
        p_v \ =\ \frac {\partial L}{\partial \dot{v}} = \frac {\partial L}{\partial \ddot{x}} \, , \nn
        p_x \ =\ \frac {\partial L}{\partial \dot{x}} -  \dot{p}_v \, ,
          \ee
        express the generalized velocities via $p_v$ and $p_x$ and rewrite \p{energy} as
         \be
         \lb{Ostr-Ham-2}
         H(p_v, p_x; v, x) \  = \ p_v \dot{v} + p_x \dot{x} - L  \ =\ 
         p_v a(p_v, x, v) + p_x v - L \left[a(p_v, x, v), x, v \right] \, ,
           \ee
           where $a(p_v, x, v)$ is the solution of the equation $\partial L(x,v,a)/\partial a = p_v$. 
           The 
           Hamilton equations of motion are
           \be
           \lb{Ham-eqmot}
          I: \ \  \frac {\partial H}{\partial p_v} \ =\ \dot{v}, \ \ \ \ \ \ \  II: \ \  \frac {\partial H}{\partial v} \ =\ -\dot{p_v}, \nn
   III: \ \ \frac {\partial H}{\partial p_x} \ =\ \dot{x}, \ \ \ \ \ \ \  IV: \ \  \frac {\partial H}{\partial x} \ =\ -\dot{p_x}  \, .
    \ee
    We see that  equation $(III)$ coincides with \p{vdef},  equations $(I)$  and $(II)$ are fulfilled, bearing in mind the definition of the canonical momenta in \p{mom-def-2}, and equation $(IV)$ is equivalent, bearing
    in mind all these definitions, 
    to   the Lagrange equation of motion
    \p{Lagr-eqmot-2}.    
             
             If the Lagrangian depends on $x, \dot{x}, \ldots, x^{(n)}$ in  a nontrivial way, one should define $n-1$ new variables
       \be
       \lb{igreki}
       y_1 = \dot{x}, \ \ y_2 = \dot{y}_1 = \ddot{x}, \ \ \ldots, \ \  y_{n-1} = \dot{y}_{n-2} = x^{(n-1)} 
        \ee
        and the canonical momenta
        \be
        \lb{mom-def-n}
        p_{n-1}  &=& \frac {\partial L}{\partial \dot{y}_{n-1}}  \nn
        p_{n-2}  &=& \frac {\partial L}{\partial \dot{y}_{n-2}} - \dot{p}_{n-1} \nn
        \ldots &=& \ldots \nn
         p_0  &=& \frac {\partial L}{\partial \dot{x}} - \dot{p}_1
         \ee
         The Hamiltonian involves then $n$ pairs of canonically conjugated variables
         $(p_0, x), \ldots, (p_{n-1}, y_{n-1})$ and reads
         \be
         \lb{Ostr-Ham-n}
         H \ =\ p_{n-1} \dot{y}_{n-1} + \cdots  + p_0 \dot{x} - L \, .
          \ee
          Again, one can be convinced that the Hamilton equations of motion following from \p{Ostr-Ham-n}
          are equivalent to the Lagrange equations of motion
            \be
  \lb{Lagr-eqmot-n}
  \left( \frac {d}{dt} \right)^n  \frac {\partial L}{\partial x^{(n)}}  - 
  \left( \frac {d}{dt} \right)^{n-1}  \frac {\partial L}{\partial x^{(n-1)}} + \cdots
  + (-1)^n \frac {\partial L}{\partial x} \ = \ 0 \, ,
   \ee
   bearing in mind the definitions \p{igreki} and \p{mom-def-n}.
  
  \vspace{1mm}

 Let us now prove an important theorem \cite{Woodard}
 \begin{thm}
 If $n \geq 2$ and the canonical momentum $p_{n-1}$ does not vanish, the Hamiltonian function \p{Ostr-Ham-n} may acquire an arbitrary real value.
 \end{thm}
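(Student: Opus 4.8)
The plan is to make the dependence of the Ostrogradsky Hamiltonian \p{Ostr-Ham-n} on the canonical momenta completely explicit and then to observe that all momenta except $p_{n-1}$ enter it \emph{linearly}. The key preliminary remark is that, among the velocities occurring in \p{Ostr-Ham-n}, only $\dot{y}_{n-1} = x^{(n)} \equiv a$ genuinely has to be traded for a momentum. The relations \p{igreki} give $\dot{x} = y_1,\ \dot{y}_1 = y_2,\ \ldots,\ \dot{y}_{n-2} = y_{n-1}$ as identities, so these are not independent velocities at all; the only inversion one performs is $p_{n-1} = \partial L/\partial x^{(n)} \Rightarrow a = a(p_{n-1}; x, y_1, \ldots, y_{n-1})$.

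First I would substitute these relations into \p{Ostr-Ham-n} to arrive at the explicit form
\be
\lb{Hlinear}
H \ = \ p_0\, y_1 + p_1\, y_2 + \cdots + p_{n-2}\, y_{n-1} + g(p_{n-1}; x, y_1, \ldots, y_{n-1}),
\ee
in which the sole ``nonlinear'' piece is $g = p_{n-1}\, a - L(x, y_1, \ldots, y_{n-1}, a)$ with $a$ as above. The decisive structural fact is that $g$ contains none of $p_0, \ldots, p_{n-2}$, so that $H$ is an affine function of each of these $n-1 \geq 1$ momenta (here the hypothesis $n \geq 2$ is used: at least the momentum $p_0$, with coefficient $y_1 = \dot{x}$, survives).

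The conclusion then follows in one line. I would fix a phase-space point at which $p_{n-1} \neq 0$, so that we sit in the genuinely order-$n$ sector where $x^{(n)}$ has been successfully traded for $p_{n-1}$ and $g$ is a finite function; and I would choose the remaining coordinates so that $y_1 = \dot{x} \neq 0$, which is compatible with $p_{n-1} \neq 0$ since $y_1$ and $p_{n-1}$ are independent canonical variables. Holding everything except $p_0$ fixed, \p{Hlinear} becomes $H = y_1\, p_0 + \mathrm{const}$ with $y_1 \neq 0$; as $p_0$ runs over $\mathbb{R}$ this is a nonconstant affine function and therefore surjects onto $\mathbb{R}$. Hence $H$ attains every real value, and in particular is bounded neither from below nor from above.

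The one point that needs care — and the place where the hypotheses do real work — is the status of $g$. One must confirm that at the chosen point $a$ is well defined and finite, which is exactly what $p_{n-1} \neq 0$ secures, so that $g$ stays bounded while $p_0$ is driven to $\pm\infty$ and the linear term dominates. I expect no further obstacle: the argument never invokes any special form of $L$, only that the top velocity is eliminated in favour of $p_{n-1}$ while the lower momenta survive linearly. This linear-in-momentum structure is the structural signature of every nondegenerate higher-derivative theory and is precisely the origin of the ghosts advertised in the introduction.
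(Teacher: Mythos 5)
Your proof is correct and is essentially the paper's own argument: both rest on the observation that only the top velocity $\dot{y}_{n-1}=x^{(n)}$ is traded for $p_{n-1}$, so the Hamiltonian depends on the remaining momenta only through the linear term $p_0 y_1+\cdots+p_{n-2}y_{n-1}$, which sweeps out all of $\mathbb{R}$. Your explicit choice of a point with $y_1\neq 0$ usefully tightens the paper's ``can obviously acquire any real value''; the only slight imprecision is attributing the solvability of $\partial L/\partial a = p_{n-1}$ for $a$ to the pointwise condition $p_{n-1}\neq 0$, whereas it really follows from the nondegeneracy of $L$ in $x^{(n)}$ (the condition the theorem's hypothesis encodes, cf.\ the paper's Remark 2).
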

 \begin{proof}
 Rewrite \p{Ostr-Ham-n} as
 \be
 \lb{Ostr-Ham-n1}
  H \ =\ p_{n-1} \dot{y}_{n-1} + p_{n-2} y_{n-1} + \cdots  + p_0 y_1 - L( \dot{y}_{n-1} , y_1,
  \ldots, y_{n-1}, x)
    \ee
    The generalized velocity $ \dot{y}_{n-1} $ can be expressed via the variables $p_{n-1}, y_1, \ldots, y_{n-1}, x$ by solving the first equation in \p{mom-def-n} for $ \dot{y}_{n-1} $. It does {\it not} depend on other generalized momenta $p_0, \ldots, p_{n-2}$.
    Thus, the only dependence of the Ostrogradsky Hamiltonian \p{Ostr-Ham-n1}
    on these momenta is due to the contribution
     \be
    X \ = \ p_{n-2} y_{n-1} + \cdots  + p_0 y_1 \, .
      \ee
      This contribution is linear in momenta and can obviously acquire any real value. And the same is true for \p{Ostr-Ham-n1}.
      \end{proof}
      
     \begin{rem} The requirement $n \geq 2$ is essential. Ordinary dynamical systems with $n =1$ include only one generalized momentum $p_0 \equiv p_{n-1}$, 
     and the dependence of the Hamiltonian on $p_0$ need not be linear. For example, the familiar Hamiltonian $H = p^2/2$ for a free particle is quadratic in $p$
      and takes only positive values.
      \end{rem}
      
      \begin{rem} All this can be easily generalized on the case of several variables $x_j$. The theorem above holds if the dependence of the Lagrangian on at least one of the higher derivatives of some such variable is non-trivial so that the corresponding partial derivative of $L$ does not vanish.
      \end{rem}
      
      \section{Pais-Uhlenbeck oscillator}
 \setcounter{equation}0
     
      The first study of quantum systems involving higher derivatives was performed in 1950 \cite{PU}, exactly hundred years after the publication  of Ostrogradsky's paper. Pais and Uhlenbeck considered the Lagrangian
      \be
      \lb{LPU}
      L \ =\ \frac 12 \left[ \ddot{x}^2 - (\omega_1^2 + \omega_2^2) \dot{x}^2 + \omega_1^2 \omega_2^2 x^2 \right] 
        \ee
        and showed that the spectrum of the corresponding Hamiltonian is not bounded either from below or from above. 
        \footnote{The same is true for still higher derivative oscillators considered in \cite{PU}. We will not distract our attention for such systems: their physics is essentially the same as the physics of the system \p{LPU}.}
        This result looks quite natural, bearing in mind that the {\it classical} energy of such Hamiltonian is unbounded, as follows from the theorem proven in the previous section. A quantum counterpart of this theorem asserting the absence of a ground state for any nondegenerate HD system will be proven at the end of this section.
        
        The Ostrogradsky Hamiltonian corresponding to the system \p{LPU} reads
        \be
        \lb{HPU}
        H \ =\ p_x v + \frac {p_v^2}2 + \frac {(\omega_1^2 + \omega_2^2) v^2}2 - \frac {\omega_1^2 \omega_2^2 x^2}2
          \ee
          One of the Hamilton equations of motion gives $\partial H/ \partial p_x = v = \dot{x}$.
          Excluding the momenta $p_x, p_v$ from three other equations, we reproduce the Lagrange equation
          \be
          \lb{PU-eqmot}
          \left[ \frac {d^4}{dt^4} + (\omega_1^2 + \omega_2^2) \frac {d^2}{dt^2} + \omega_1^2 \omega_2^2 \right] x \ =\ 0 \, .
           \ee 
           The quantum Hamiltonian also has the form  \p{HPU} with 
           $\hat{p}_x = -i \partial/\partial x, \  \hat{p}_v = -i \partial/\partial v $. To determine
          its spectrum, it is convenient to perform the following quantum canonical transformation
          \footnote{For a general theory of quantum canonical transformations see \cite{Anderson} and references therein. One can also first perform the canonical transformation at the classical level and quantize afterwards.} 
          \cite{Man-Dav}:
      \be
      \lb{XP-canon}
    X_1 &=& \frac 1{\omega_1} \frac {\hat{p}_x + \omega_1^2 v}{\sqrt{\omega_1^2 - \omega_2^2}},
     \ \ \ \        \hat{P}_1 \equiv  -i \frac \partial {\partial X_1} \ = \  \omega_1  \frac {\hat{p}_v + \omega_2^2 x}{\sqrt{\omega_1^2 - \omega_2^2}}, \nn
    X_2 &=&  \frac {\hat{p}_v + \omega_1^2 x}{\sqrt{\omega_1^2 - \omega_2^2}},
     \ \ \ \   \ \ \      \hat{P}_2 \equiv  -i \frac \partial {\partial X_2}  \  = \ \frac {\hat{p}_x + \omega_2^2 v}{\sqrt{\omega_1^2 - \omega_2^2}} \, .
       \ee
(We assumed that $\omega_1 > \omega_2$. The case of equal frequencies will be treated a little later.)
In these terms, the Hamiltonian reduces to a difference of two oscillator
 Hamiltonians:
  \be
  \lb{H-raznost}
  H = \ \frac {\hat{P}_1^2 + \omega_1^2 X_1^2}2  - \frac{\hat{P}_2^2 + \omega_2^2 X_2^2}2 \, .    \ee
    Its spectrum is 
    \be
    \lb{spec-PU}
    E_{nm} \ =\ \left(n + \frac 12 \right) \omega_1 - \left(m + \frac 12 \right) \omega_2 
     \ee
     with positive integer $n,m$.
     
     One may ask: how come the spectrum is asymmetric under interchange $\omega_1 \leftrightarrow \omega_2$, whereas the Lagrangian \p{LPU} and the Hamiltonian \p{HPU} have this symmetry?
     To understand this note that the canonical transformations \p{XP-canon} do not have this symmetry, they involve $\sqrt{\omega_1^2 - \omega_2^2}$ rather than   $\sqrt{\omega_2^2 - \omega_1^2}$. If we assumed $\omega_2 > \omega_1$, the appropriate canonical transformation would be modified and the sign of energy in \p{spec-PU} would be reversed. 
     
     Consider the case $\omega_1 = 2 \omega_2$. Then
    \be
    \lb{spec-otn-2}
    E_{nm} \ =\ \left(2n -m + \frac 12 \right) \omega_2  \, .
     \ee  
     We see that the spectrum is discrete, not bounded and that each level is infinitely degenerate.
     The same properties 
     hold for any rational ratio $\omega_1/\omega_2$.
     
     An interesting situation arises for incommensurable frequencies. In that case, the spectrum is
     {\it everywhere dense}: for any energy $E$ one can find an arbitrary close eigenvalue $E_{nm}$.
     However, the spectrum is not continuous, but rather {\it pure point}: the wave functions of all
     eigenstates are normalizable.

In the degenerate case, $\omega_1 = \omega_2 \equiv \omega$, the situation is somewhat more complicated. 
We cannot perform the canonical transformation
\p{XP-canon} anymore and the Hamiltonian is not reduced to a difference of two oscillator Hamiltonians. The solution of the problem
can be obtained by the following trick \cite{PU,Bolonek}.

At the first step, we get rid of the potential terms $\propto x^2$ and $\propto v^2$ in the Hamiltonian 
\be
\lb{HPU-degen}
 \hat{H} \ =\ \hat{p}_x v + \frac {\hat{p}_v^2}2 + \omega^2 v^2  - \frac {\omega^4  x^2}2
  \ee
  by representing its eigenfunctions as 
  \be
\lb{Psi-phi}
\Psi(x,v) \ =\ e^{-i\omega^2 xv} \phi(x,v) \, .
  \ee
The Hamiltonian acting on $\phi(x,v)$ has the form
 \be
\lb{H-phi}
\hat{H}_\phi \ =\ \frac 12 \hat{p}^2_v + v \hat{p}_x - \omega^2 x \hat{p}_v \, .
  \ee
At the second step, we perform the quantum canonical transformation

\be
\lb{Bolonek-canon}
 x \to  \frac 1\omega x
  + 
\frac 1{4\omega^2} \hat{p}_v, \ \ \  \hat{p}_x \to \omega \hat{p}_x, \ \ \  v \to v + \frac 1{4\omega} \hat{p}_x, \ \ \ 
\hat{p}_v \to \hat{p}_v \, .
 \ee
The transformation \p{Bolonek-canon} is the superposition of the scale transformation  
$x \to x/\omega, \hat{p}_x \to \omega \hat{p}_x$
and the unitary transformation 
$$\hat{O} \to \exp\left\{\frac {i\hat{p}_x \hat{p}_v}{4\omega} \right\} \hat{O} \exp\left\{-i\frac {\hat{p}_x \hat{p}_v}{4\omega} \right\} \, . $$
 It gives the new Hamiltonian 
\be
\lb{H-varphi}
\hat{H}'_\phi \ =\ \frac {\hat{p}_x^2 + \hat{p}^2_v}4 + \omega (v \hat{p}_x - x \hat{p}_v )
  \ee
acting on the wave function $\phi'(x, v)$  related to $\phi(x, v)$ according to
 \be
\phi(x,v) \ =\ \exp \left\{ \frac i{4\omega^2} \frac {\partial^2}{\partial x \partial v } \right \}  \phi'(\omega x, v) \, .
  \ee

The first term in the Hamiltonian \p{H-varphi} describes free motion in the transformed $(x,v)$ plane, and the second term is proportional
to the ``angular momentum'' operator 
 \be
\lb{l}
\hat{l} = v \hat{p}_x - x \hat{p}_v \, .
  \ee
 Expressed in the original variables, this operator reads
  \be
\lb{l-orig}
\hat{l} \ =\ \frac {v \hat{p}_x}{2\omega} - \frac \omega 2 x \hat{p}_v + \frac 1{4\omega} \left( \hat{p}_v^2 - 
\frac {\hat{p}_x^2}{\omega^2} \right) + \frac {3\omega }4(v^2 - \omega^2 x^2) \, . 
  \ee
A dedicated reader may check that it commutes with the original Hamiltonian \p{HPU-degen}.

The eigenfunctions of \p{H-varphi} are the same as for the free Laplacian, but the energy is shifted by $l\omega$, where
$l$ is the integer eigenvalue of \p{l}. They are
  \be
\lb{varphi-state}
\phi'_{lk}(\omega x,v; t)\ \propto \  J_l(kr) e^{-il\theta} e^{-it (l\omega + k^2/4)} \, ,
  \ee
where $(r, \theta)$ are the polar coordinates in the plane $(\omega x, v)$. 

The eigenfunctions of the original Hamiltonian are
 \be
\lb{Psi-state}
\Psi_{lk}(x,v; t) \ \propto \  e^{-i\omega^2 xv} 
 \exp \left\{ \frac i{4\omega^2} \frac {\partial^2}{\partial x \partial v} \right\} 
\left[ J_l\left( k \sqrt{v^2 + \omega^2 x^2} \right) \left( \frac {\omega x - iv}{\omega x + iv} \right)^{l/2}   \right]
e^{-it (l\omega + k^2/4)} 
 \ee
 They are not normalizable and describe a continuum spectrum rather than a pure point spectrum characteristic for the nondegenerate system with $\omega_1 \neq \omega_2$.  Any real energy is admissible, and each energy level is infinitely degenerate:
the eigenfunctions $\Psi_{l+1,k}$ and $\Psi_{l, \sqrt{k^2 + 4\omega}}$ have the same energy.

The difference in the quantum behaviour of the degenerate ($\omega_1 = \omega_2$) and non-degenerate ($\omega_1 \neq \omega_2$) systems matches the difference in their classical behaviour. The motion described by the Hamiltonian \p{H-raznost} is finite and that is why the spectrum is pure point. And in the degenerate case the situation is different. The equation of motion
           \be
          \lb{PU-eqmot-degen}
          \left[ \frac {d^4}{dt^4} + 2\omega^2  \frac {d^2}{dt^2} + \omega^4 \right] x \ =\ 0 \, .
           \ee 
admits not only ordinary oscillatory solutions $x(t) \propto e^{i\omega t}$, but also the solutions
\be
\lb{lin-rost}
 x(t) \propto t e^{i\omega t}\, , 
  \ee
where the amplitude of oscillations grows with time. An infinite classical motion produces a
continuum spectrum in the quantum problem.

One can observe that both in the non-degenerate and in the degenerate case both the classical and the quantum dynamics of the system are quite benign. The wave functions are known explicitly.
The Hamiltonian is Hermitian and the quantum evolution is unitary. For example, the evolution
operator for the system \p{H-raznost} is simply a product of the evolution operators for the individual oscillators:
  \be
  \lb{KK} 
{\cal K}(X_1', X_2'; X_1, X_2; t) \ =\ \sum_{n=0}^\infty \psi^*_n(X_1') \psi_n(X_1) e^{-i\omega_1 t(n+1/2)} \ 
\sum_{m=0}^\infty \psi^*_m(X_2') \psi_m(X_2) e^{i\omega_2 t(m+1/2)} \, ,
  \ee
  where $\psi_n(X)$ are the standard oscillator eigenfunctions.

  One can meet in the literature a statement that the Pais-Uhlenbeck oscillator with degenerate frequencies is not unitary
  due to the presence of Jordan blocks \cite{BM-nonunit}. But this is not correct. Infinite-dimensional Jordan blocks
  appear in this problem if one tries to describe the motion in terms of the oscillator wave functions, as if the motion were
  finite. And their appearance is simply a manifestation of the fact that the spectrum in this case is continuous, and not of the violation
  of unitarity. We refer the reader to Appendix A and also to Refs. \cite{PU-Sigma},\cite{except}, where we clarify this point.
  
  The evolution operator \p{KK} can also be evaluated by calculating numerically the {\it Minkowskian} path integral, using e.g. the methods of \cite{Blin}. This integral can be expressed either in the Lagrangian form
     \be
   \lb{L-path}
   \sim \int \prod_t dx(t) \exp \left\{  i \int dt\, L(\ddot{x}, \dot{x}, x) \right\}
    \ee
    or in the Hamiltonian form
     \be
   \lb{H-path}
   \sim \int \prod_t dx(t) dv(t)  dp_x(t)  dp_v(t)    \exp \left\{  i \int dt\, [p_v \dot{v} + p_x \dot{x} - H(p_v, p_x; v, x)] \right\} \, .
    \ee
    Indeed, substituting in \p{H-path} the Hamiltonian \p{HPU} and integrating over 
    $\prod_t dp_x(t)$, we obtain the factor
      \be
       \lb{factor}
       \prod_t \delta [v(t) - \dot{x}(t) ] \, .
        \ee
        Performing the integral over $\prod_t dv(t)$ with this factor and doing the Gaussian integral over $\prod_t dp_v(t)$, we
        reproduce \p{L-path} with the Lagrangian \p{LPU}. 
        
        In ordinary quantum theories including a well-defined vacuum state we can perform the Wick rotation
        $ t \to -i\tau$ and define an Euclidean path integral. This does not work, however, for the integral \p{H-path}. 
        Already the integral 
          \be
          \lb{int-px}
          \prod_\tau    \int_{-\infty}^\infty  dp_x(\tau)    
            \exp \left\{   \int d\tau \, p_x(\tau) \left[ i \frac {dx(\tau)}{d\tau} - v(\tau) \right]  \right \}
           \ee
        diverges and does not give anything like \p{factor}. (The latter circumstance means that the variable $v(\tau)$ can no longer be interpreted as the velocity.) 
      The fact that the Euclidean path integral diverges agrees well with that the evolution operator \p{KK} is not defined at imaginary time --- at $t = -i\tau$ the second series diverges with a vengeance.   
        
        Another way of going into Euclidean space was suggested in \cite{Hawking}. Instead of performing an analytic continuation of the  Euclidean path integral \p{H-path}, one could try to do it at the Lagrangian level, capitalizing on the fact that the Euclidean
        counterpart of the Lagrangian \p{LPU} is positive definite. Trading $t$ for $i\tau$ 
        \footnote{The standard Wick rotation $t \to -i\tau$ works if one adds the extra minus sign in the definition of $L$.}
        in \p{L-path}, one obtains a converging path integral.
        
        However, if one tries to take the expression for the Euclidean evolution operator thus derived and goes back to real Minkowsky time, one obtains something which does not describe a unitary evolution with a Hermitian Hamiltonian \cite{Hawking,Maslanka}.
        It is thus better not to do so, but  keep the time real all the time.

  \vspace{1mm}
  
   We analysed in detail the simple system with the Hamiltonian \p{HPU} and showed that its spectrum  has no ground state. But the absence of the ground state is a characteristic feature of all higher-derivative theories, not only of the Pais-Uhlenbeck oscillator. One can prove the following theorem:
   
   \begin{thm}
   The quantum conterpart of the Ostrogradsky  Hamiltonian for a non-degenerate higher-derivative system has no ground state.
   \end{thm}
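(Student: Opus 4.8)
The plan is to establish the quantum statement as a direct echo of the classical Theorem~1: the Ostrogradsky Hamiltonian depends \emph{linearly} on the intermediate momenta $p_0,\ldots,p_{n-2}$, and this linearity — which classically lets $H$ attain any real value — survives quantization in the form of a variational estimate driving the energy to $-\infty$. Concretely, I would first recall that the nondegeneracy assumption is precisely what allows the first relation in \p{mom-def-n} to be solved for $\dot y_{n-1}$, so that after the Legendre transform the Hamiltonian \p{Ostr-Ham-n1} splits as
\be
\lb{split}
\hat H \ =\ \hat h(\hat p_{n-1}, x, y_1,\ldots, y_{n-1}) + \hat p_{n-2} y_{n-1} + \cdots + \hat p_0 y_1 \, ,
\ee
where the block $\hat h$ contains the highest momentum $\hat p_{n-1}$ but none of $\hat p_0,\ldots,\hat p_{n-2}$, and each remaining term pairs a momentum $\hat p_k$ with the coordinate $y_{k+1}$ one slot higher, which commutes with it, so each such term is manifestly Hermitian.

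The heart of the argument is then a one-parameter family of unit-norm trial states obtained by a ``boost'' in the variable $x$ conjugate to the lowest momentum $\hat p_0$. I would fix any normalized $\Phi$ in the domain of $\hat H$ with $\langle\Phi|y_1|\Phi\rangle = c \neq 0$ (for instance a Gaussian packet displaced in $y_1$), set $U_\lambda = e^{i\lambda x}$ and $\Psi_\lambda = U_\lambda \Phi$. Since $U_\lambda$ commutes with everything in \p{split} except $\hat p_0$, on which it acts as the pure shift $\hat p_0 \to \hat p_0 + \lambda$, one finds $U_\lambda^\dagger \hat H U_\lambda = \hat H + \lambda y_1$, and therefore
\be
\lb{expect}
\langle\Psi_\lambda|\hat H|\Psi_\lambda\rangle \ =\ \langle\Phi|\hat H|\Phi\rangle + \lambda c \, .
\ee
Letting $\lambda c \to -\infty$ makes this expectation value arbitrarily negative. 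Because the infimum of $\langle\psi|\hat H|\psi\rangle$ over normalized $\psi$ is the bottom of the spectrum of the self-adjoint operator $\hat H$, the spectrum is unbounded below and no ground state exists. As a consistency check, this is exactly the mechanism behind the Pais--Uhlenbeck spectrum \p{spec-PU}: there $\hat p_0 y_1 = \hat p_x v$, and the boost $e^{i\lambda x}$ reproduces the unbounded $-(m+\tfrac12)\omega_2$ tower.

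I would emphasize the role of the hypotheses. Degeneracy would obstruct the Legendre inversion and hence the clean split \p{split}, whereas $n \geq 2$ guarantees that at least one genuine intermediate pair $(\hat p_0, x)$, with $y_1 = \dot x$ a distinct coordinate, is present, so that the linear term $\hat p_0 y_1$ actually occurs (for $n=1$ no such term exists, as noted in Remark~1). The main obstacle I anticipate is not conceptual but functional-analytic: one must verify that $\hat H$ is (essentially) self-adjoint on a suitable domain and that the chosen $\Phi$ lies in it, so that $\langle\Phi|\hat H|\Phi\rangle$ in \p{expect} is finite. Fortunately the boost identity $U_\lambda^\dagger \hat H U_\lambda = \hat H + \lambda y_1$ holds for \emph{any} operator ordering inside $\hat h$, since $e^{i\lambda x}$ commutes with $x$, with every $y_k$, and with $\hat p_{n-1}$; thus the ordering ambiguities of the Legendre-transformed block never enter the estimate, and only the mild requirement that $\Phi$ be a smooth, rapidly decaying packet is actually needed.
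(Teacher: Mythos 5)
Your proof is correct, and its engine is exactly the paper's: your trial family $\Psi_\lambda = e^{i\lambda x}\Phi$ is the ansatz \p{Psi-c}, and the boost identity $U_\lambda^\dagger \hat H U_\lambda = \hat H + \lambda y_1$ is, for $n=2$, precisely the computation $\hat H \Psi_c = E_0 \Psi_c + cv\,\Psi_c$ that produces \p{E(c)}. Where you genuinely differ is the logical packaging. The paper argues by contradiction: it assumes a normalizable ground state $\Psi_0$, seeds the trial family with it, and must then (i) assume a discrete spectrum, regularizing by a box otherwise, and (ii) split into cases according to whether $\int v\,|\Psi_0|^2\,dx\,dv$ vanishes, handling the vanishing case via the strict variational inequality for a state with nonzero projection on excited levels. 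You instead seed with an arbitrary packet satisfying $\langle \Phi| y_1|\Phi\rangle = c \neq 0$ (always arrangeable by a displacement) and send $\lambda c \to -\infty$, concluding via the Rayleigh--Ritz characterization that $\inf \mathrm{spec}\,\hat H = -\infty$. This buys three things: no case analysis, no discreteness assumption or box regularization, and a strictly stronger conclusion --- the spectrum is unbounded from below, which actually disposes of the ``exotic possibility'' the paper concedes it cannot exclude in its Remark 3 (levels accumulating at a finite $E^* > -\infty$). The price is the functional-analytic hypothesis you correctly flag: the form infimum equals the spectral infimum only for a self-adjoint $\hat H$; note, though, that since your estimate uses only vectors in the common domain and the boost identity is ordering-independent, the same conclusion holds for \emph{any} self-adjoint extension of the symmetric Ostrogradsky operator, so the caveat is mild. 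You also carry out the general-$n$ split $\hat H = \hat h(\hat p_{n-1}, x, y_k) + \hat p_{n-2}y_{n-1} + \cdots + \hat p_0 y_1$ explicitly, exactly as in the proof of the paper's Theorem 1, where the quantum theorem's proof treats $n=2$ and declares the generalization obvious.
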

   \begin{proof}
   We will prove it for the simplest system \p{Lagrx2} (with $\partial L/\partial \ddot{x} 
   \neq 0$).  \footnote{A similar proof was given in Ref. \cite{Raidal}.} A generalization for the Lagrangians involving still higher derivatives and many degrees of freedom is obvious. Also we will assume that the spectrum of the Hamiltonian is discrete. Regarding the systems with continuous spectrum, one should first regularize them by ``putting them in a box'':   the range where the variables $x,v$ may change should be made finite, which  makes the motion finite and the spectrum discrete. Then we can apply our reasoning and then remove the regularization, sending the size of the box to infinity.
   
   Consider the quantum Hamiltonian \footnote{The ordering ambiguity is irrelevant for the arguments below.}
   \be
   \lb{H-Ostro}
  \hat{H} \ =\  a(\hat{p}_v, x, v)\hat{p}_v +  v\hat{p}_x - L \left[a (\hat{p}_v, x, v), v, x \right] \, .
    \ee
    Its middle term is
    $-iv \partial/ \partial x$. Suppose that this Hamiltonian has a normalized ground state 
    $\Psi_0(x, v)$ of energy $E_0$.  We take a variational ansatz
    \be
    \lb{Psi-c}
    \Psi_c(x,v) \ =\ e^{icx} \Psi_0(x,v) \, .
     \ee 
     Obviously, 
     $$ \int \Psi_c^* \Psi_c \, dx dv \ =\ \int \Psi^*_0 \Psi_0 \, dx dv \ =\ 1 \, . $$
     The function $\Psi_c(x,v)$ is not an eigenfunction of the Hamiltonian:
      $$\hat{H} \Psi_c \ =\ E_0 \Psi_c + cv \Psi_c  \neq \lambda \Psi_c \, .$$
      This applies also to any linear combination of $\Psi_c$,
      $$ \Psi_f \ =\ \int f(c) \, \Psi_c \, dc  = {\tilde f}(x) \Psi_0(x,v) $$
      [${\tilde f}(x) $ is the Fourier transform of $f(c)$]:
      $$\hat{H} \Psi_f \ =\ E_0 \Psi_f -iv \frac {\partial  {\tilde f}(x)} {\partial x} 
      \Psi_0 \neq \lambda \Psi_f \, .$$
     Then $\Psi_c$ and $\Psi_f$ have nonzero projections on excited states and  the variational energy
     \be
     \lb{var-energy}
     E(c) \ =\ \int \Psi_c^* \hat{H} \Psi_c \, dx dv 
      \ee
      should exceed $E_0$. However, a direct calculation gives
      \be
      \lb{E(c)}
       E(c) \ =\ E_0 + c \int v |\Psi_0|^2 \, dx dv \, .
        \ee
      If the integral $\int v |\Psi_0|^2 \, dx dv $ is different from zero, we can shift the energy down by choosing an appropriate sign of $c$. If this integral is zero, the energy does not depend on $c$ and does not grow, as it should  if $\Psi_0$ would be a ground state. Thus, the original assumption of the existence of the ground state was wrong.

   \end{proof}
    
    \begin{rem}
    Applying the same reasoning, but reversing the sign of $c$ in \p{E(c)}, we may conclude that a ``sky state'', a state with the maximal energy, is also absent --- the spectrum is not bounded neither from below, nor from above. 
     \end{rem}
     
     \begin{rem}
     A meticulous mathematician would notice that we have proven the absence of the ground state, but this alone does not imply yet that
     the energy levels of the Hamiltonian go all the way down to negative infinity. There is a logical possibility that the levels represent an infinite sequence with $E_{n+1} < E_n$, but the spacing between the adjacent levels goes down for large $n$ so that the sequence has a finite limit \, $\lim_{n \to \infty} E_n = E^* > -\infty$. 
     
     But this is really an {\it exotic} possibility --- I am not aware of quantum systems exhibiting such behaviour. Probably, a clever expert in functional analysis could exclude it...
     \end{rem}

  \section{Philological digression}
  \setcounter{equation}0
  
  We have proven that the nondegenerate quantum higher-derivative systems have no ground state. How should one {\it call} this phenomenon?
  This question is actually not so irrelevant. Experience shows that an inexact, not carefully chosen name may excite false associations and lead   eventually to wrong claims.
  
  Traditionally, one says that such bottomless systems involve {\it ghosts}. Let us explain where this name came from and what it means. By ``ghosts'' one usually has in mind the states with negative norm. Negative norm means negative probability and production of such states means violation of unitarity. Such ghosts appear e.g. in gauge theories. These are scalar photons in QED, and  scalar gluons and Faddeev-Popov ghosts in non-Abelian theories. 
  
   In the framework of the Gupta-Bleuler quantization procedure for QED, one introduces the creation and annihilation operators
   $a_\mu^\dagger (\vecg{k})$ and $a_\mu(\vecg{k})$ for all four photon polarizations, which satisfy the following
   commutation  relations:
     \be
     \lb{Gupta}
     [a_\mu(\vecg{k}), a^\dagger_\nu(\vecg{q})] \ =\ - \eta_{\mu\nu} \delta (\vecg{k} - \vecg{q}) \, .
       \ee
       The commutator $[a_0, a^\dagger_0]$ is then negative. That means  that the scalar photon state has a negative norm.
       Indeed, introduce the Fock vacuum $|\Phi \rangle$ whose norm is positive. We then have
       \be
       \lb{neg-norm}
       \left| |a_0^\dagger |\Phi \rangle \right|^2 \ =\ \langle\Phi|a_0 a_0^\dagger | \Phi \rangle \ =\ \langle \Phi| [a_0, a_0^\dagger] | \Phi \rangle \ =\ -\langle \Phi|\Phi \rangle  \, .
        \ee
        In gauge theories, such ghosts are harmless: one can define a reduced physical space involving only
        the physical states with positive norm (transverse photons or transverse gluons) and show that the ghosts states are not created in collisions of physical particles. \footnote{This is explained in many textbooks including my own book \cite{kniga}.}
        
        Let us go back to the Pais-Uhlenbeck oscillator. Consider the non-degenerate case, which is more simple. Introducing the creation and annihilation operators in the usual way, we may write the quantum Hamiltonian as 
         \be
         \lb{HPU-via-a}
         H \ =\ \omega_1 a_1 a_1^\dagger - \omega_2 a_2 a_2^\dagger + C \, .
         \ee
         The excitations of the second term give a tower of states with decreasing energies all the way down to $-\infty$. Suppose, however, that we do not like these negative energies so much that we want to get rid of them by any cost. 
         Then one can formally define a state $|\Phi \rangle$  which is annihilated by $a_2^\dagger$ rather than by $a_2$. 
         This state is strange and unhandy: its wave function is \cite{Ilhan}
          \be
          \lb{wrong-vac}
           \Phi(X_2)   \ \propto \ \exp \left\{ \frac {\omega_2}2 X_2^2 \right\} 
             \ee
         so that the state is not normalizable. Still formally the state $|\Phi \rangle $ has the lowest energy. We can even bring it to zero or to any other finite value by choosing an infinite positive constant $C$ in \p{HPU-via-a}. Then the operator $a_2$ acting on $|\Phi \rangle$ would increase its energy, and we can {\it call} it a creation operator $b^\dagger$. And
         $a_2^\dagger$ is now interpreted as the annihilation operator $b$. The price one has to pay for that  is that the commutator
         $[b, b^\dagger]$ is now negative and the states describing excitations above the ``vacuum'' $|\Phi \rangle$ are ghost states, they have negative norm. 
         
         Personally, I find this construction rather awkward. It is much better to talk about negative energies, keeping the norm positive. But,  historically, people thought in these terms and that is why they  (wrongly) believed for a long
         time that higher derivatives necessarily entail the violation of unitarity. 
         
R. Woodard suggested to call this phenomenon ``Ostrogradskian instability'' without the reference to ghosts.
This name has, however, its own drawbacks. First of all, Ostrogradsky knew nothing  about quantum Hamiltonians and their spectra. The observation of the absence of the ground states in HD quantum systems belongs not to him, but to Pais and Uhlenbeck. \footnote{And the remark that the classical energy of a generic HD system is unbounded, the Theorem 1 of Sect. 2, which  Woodard calls ``The theorem of Ostrogradsky'',  belongs not to Ostrogradsky, who developped the general classical Hamiltonian formalism but did not study the dynamics of particular HD systems, but to Woodard himself.}

In addition, the word ``instability'' invokes wrong images. Having heard it, a physicist imagines a ball on the top of the hill and thinks of the exponential growth of deviations from the equilibrium. But for certain HD systems, there is no such growth. The spectrum has no bottom, but there is no instability --- neither at the classical nor at the quantum level.

Thus, we prefer not  to use this word and not invent anything new, but rather to stick to the traditional and more familiar to most people  name ``ghost'', not invoking, however, the negative metric description. For us, a
 {\it ghost system} is by definition a system where the Hamiltonian does not have a ground state but involves {\it ghosts} ---  oscillatory-type excitations with negative energies, with all the states of this Hamiltonian having positive norm.
  We will call such exscitations ghosts, but will distinguish the systems with {\it benign} ghosts, where the quantum problem is well defined, as it is for the Pais-Uhlenbeck oscillator, and the systems with {\it malignant} ghosts involving
  a collapse and loss of unitarity. The latter systems are more common. They will be discussed in Sect. 5. And some examples 
  of the benign ghost systems will be given in Sects. 6,7. 
  
  I also have to comment here on another a little confusive issue. C. Bender and P. Mannheim suggested to bust 
  ghosts by replacing the original Hilbert space spanned over the eigenfunctions $\Psi(x,v)$ of the Hamiltonian
  \p{HPU} or, which is equivalent for different frequencies, the Hilbert space spanned by the oscillator wave functions of the Hamiltonian \p{H-raznost} by {\it another} Hilbert space involving the functions depending on real $X_1$ and {\it imaginary} $X_2 = iY_2$ and normalized in that region \cite{BM}. Then the wave function
  \p{wrong-vac} becomes a good normalizable vacuum state, $\int_{-\infty}^\infty |\Phi|^2 dY_2 < \infty$, and all other eigenstates have positive energies (and positive norm).
  
  Personally, I do not quite see a point of doing so. One should understand that this complexification brings
  us to  {\it another} quantum problem having little to do with the original one. No ghosts in the former does not mean no ghosts in the latter.

  \section{Including interactions}
  \setcounter{equation}0

  We discussed so far only mechanical systems, but it is easy to write down a field-theory generalization of \p{LPU}. We can write
    \be
    \lb{LPU-field}
    {\cal L} \ =\ \frac 12 \left[ (\Box \phi)^2 - (M_1^2 + M_2^2) (\partial_\mu \phi)^2 + M_1^2 M_2^2 \phi^2 \right] \, .
     \ee
     When $M_1 \neq M_2$, one can go over to the Hamiltonian, perform an approriate canonical transformation and bring it again in the Lagrangian form to obtain
      \be
    \lb{L-raznost-field}    {\cal L} \ =\ \frac 12 \left[ (\Box \Phi_1)^2 - M_1^2 \Phi_1^2 \right]  - 
    \frac 12 \left[ (\Box \Phi_2)^2 - M_2^2 \Phi_2^2 \right]  \, .
     \ee
     This Lagrangian includes an ordinary scalar field $\Phi_1$ and a ghost field (i.e., in our terminology, a field whose quanta carry negative energies) 
     $\Phi_2$. We can expand each field in Fourier modes, in which case the Lagrangian is split into an infinite number of noninteracting sectors 
     with a definitite momentum $\vecg{p}$. In each such sector, the dynamics is described by the Hamiltonian \p{H-raznost} with
      $\omega_{1,2}^2 = \vecg{p}^2 + M_{1,2}^2$.
      In free theory, the negative sign of the second term in \p{H-raznost} or \p{L-raznost-field} is irrelevant. The classical equations of  motion and their solutions do not depend
      on this sign. The spectra of the quantum Hamiltonians are different for different signs, but in the absence of interactions, energy does not really play a dynamical role and serves only for bookkeeping.
      
      A trouble may set it, however, if interactions are included. Heuristically, one could expect in this case an instability due to copious production of ghosts. Let us see whether an interactive system has or not this instability.
      
      We modify the Lagrangian \p{LPU} by adding there  nonlinear terms. 
      Consider for example the system \cite{malicious}
 \be
      \lb{LPU-nonlin}
      L \ =\ \frac 12 \left[ \ddot{x}^2 - 2\omega^2  \dot{x}^2 + \omega^4  x^2 \right] - \frac 14 \alpha
      x^4 \, .  
        \ee
The classical equations of motion read 
  \be
  \lb{eqmot-PUnonlin}
   \left( \frac {d^2}{dt^2} + \omega^2 \right)^2  x - \alpha x^3 \ =\ 0 \, .
    \ee 
The classical trajectories depend on four initial conditions. There is an obvious stationary point
 \be
 \lb{stat-point}
 x(0) \ = \ \dot{x}(0) \ = \ \ddot{x}(0) \ = \ x^{(3)}(0)  \ =\ 0\, .
  \ee
  The behaviour of the system at the vicinity of this point depend on the sign of $\alpha$. If $\alpha <0$, the Ostrogradsky Hamiltonian  acquires an extra negative contribution to the energy and all the trajectories other than $x(t) = 0$ are unstable --- they run away to infinity in finite time. This instability is even worse that Lyapunov's exponential growth of perturbations. We are dealing here with a {\it collapse}: the classical dynamical problem is not well posed beyond a certain time  horizon.
 
  The situation is better  for positive $\alpha$. The stationary point \p{stat-point} lies in the center
  of an ``island of stability'' --- the trajectories with initial conditions at its vicinity do not go astray, but 
  exhibit an oscillatory behavour. However, this island has a shore. When the deviations of intial conditions from \p{stat-point} are large enough, a trajectory collapses. To chart this shore, one should perform a numerical study. Assume for simplicity that $\dot{x}(0), \ddot{x}(0)$ and $ x^{(3)}(0)$ stay zero, and $x(0) = c > 0$. Then a critical value $c_{\rm crit}$ can be found, above which the trajectories go astray and run to infitiny, but  the system only exhibits benign oscillations around zero
 when $c < c_{\rm crit}$.  This value is roughly  $c_{\rm crit} \approx 0.3 \, \omega^2/\sqrt{\alpha}$ (the dependence on $\omega$ and $\alpha$ follows, of course, from simple scaling arguments).
 For illustration, we plotted in Fig.\ref{island} the solution to the equations of motion \p{eqmot-PUnonlin} for $\omega = \alpha = 1$ and $x(0)$
 just above $ c_{\rm crit}$. After some quasiharmonic oscillations, the trajectory finally goes astray and runs to infinity. And for $x(0) < c_{\rm crit}$, it
 keeps oscillating forever.

 \begin{figure}[ht!]
   \begin{center}
 \includegraphics[width=5 in]{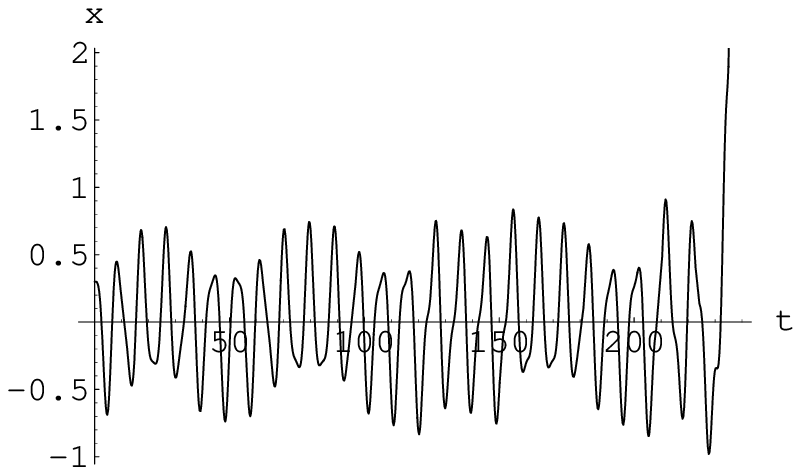}
    \end{center}
\caption{Oscillating and collapsing}
\label{island}
\end{figure} 
The same behaviour is seen if three other initial conditions are shifted from zero. The island has a finite area (or rather a finite phase space volume).

A similar island of stability was  observed in \cite{Carroll} for the model Hamiltonian
\footnote{The authors of this paper were interested in the toy Hamiltonian \p{HCarrol} in association with a certain cosmological problem.} 
\be
\lb{HCarrol}
H \ =\ \frac 12 \left[ p_x^2 - (p_y^2 + y^2) + x^2y^2 \right] \, ,
 \ee
 where a free particle is coupled to the ghost oscillator. If we pose the initial conditions
  \be
  \lb{incond}
  x(0) = y(0) = a, \ \ \ \ \ \ \ \ \ \ p_x(0) = p_y(0) = 0 \, ,
    \ee
 the system undergoes benign oscillations at the vicinity of the origin if 
 $ a \la 0.59$ and goes astray, running to the infinity, for larger deviations.
  
 What happens in the quantum case? One cannot give a quite definite answer to this question without a special numerical study, but certain heuristic arguments that the quantum problem is also malignant and collapsing in this case will be given below.
 
 Classical and quantum collapse are well known for certain special ordinary (not HD) systems. The simplest such system describes the 3-dimensional motion of a particle with the attractive potential 
    \be
    \lb{-kappa/r2}
    V(r) = -\kappa/r^2\, .
     \ee
      Classically, for certain initial conditions, the particle falls to the center
 in a finite time. The quantum dynamics of this system depends  on the value of $\kappa$. If $m\kappa < 1/8$ (where $m$ is the mass of the particle), the ground state
 exists and unitarity is preserved.  If $m\kappa > 1/8$, the spectrum is not bounded from below and, which is worse, the quantum problem cannot
 be well posed until the singularity at the origin is removed. For example, one can pose $V(r) = -\kappa/r^2 $ for $r >a$ and 
 $V(r) = -\kappa/a^2 $ for $r \leq a$.  The spectrum  then depends on $a$ \cite{Popov}.  Without such a cutoff, the probability ``leaks''  into the singularity and unitarity is violated. And for $m\kappa < 1/8$ quantum fluctuations cope successfully with the attractive force of the potential and prevent the system from collapsing.
 
 Going back to the system \p{LPU-nonlin}, a variational analysis  shows that, in constrast to the system \p{-kappa/r2} with small $\kappa$,  the ground state in the spectrum is absent \cite{PU-Sigma}. One can {\it conjecture} that for all such ghost-ridden systems involving collapsing classical trajectories, their ghosts are malignant and unitarity is violated by the same mechanism as for the strong attractive potential \p{-kappa/r2}. On the other hand, one can also conjecture that 
 
 {\it If the classical dynamics of the system is benign, its quantum dynamics is also benign, irrespectively of whether the spectrum has or does not have a bottom.}
 
 \section{Benign ghosts: classical and quantum mechanics}
 \setcounter{equation}0
 
 In some special cases, a nontrivial interacting ghost system may be completely benign: all the classical trajectories (and not only the trajectories restricted to a
 limited region of phase space) are stable and behave well at all times. The quantum problem is also well posed.
 
 The first example of such a system was found in \cite{Robert} \footnote{
 It was found by exploring a certain higher-derivative supersymmetric system. The details are given in Appendix B.}. The Hamiltonian involves two pairs of the dynamic variables, $(x, p)$ and $(D, P)$, and reads
   \be
   \lb{H-Robert}
   H \ =\ pP + DV'(x) \, ,
     \ee
     where $V(x)$ is an arbitrary smooth even function. The simplest nontrivial case is 
      \be
      \lb{V-Robert}
      V(x) \ =\  \frac {\omega^2 x^2}2 + \frac {\lambda x^4}4 \, , \ \ \ \ \ \ \lambda > 0
      \, .
         \ee
The kinetic part of the Hamiltonian \p{H-Robert} is not positive definite, and, obviously, the spectrum of its quantum counterpart 
has no bottom. 
If introducing
  \be
  \lb{X12-Robert}
  X_{1,2} \ =\  \sqrt{\frac \omega 2} x \pm  \frac 1 {\sqrt{2\omega}}  D \, , \ \ \ \ \ \ 
   P_{1,2} \ =\  \frac 1 {\sqrt{2\omega}} p \pm   \sqrt{ \frac \omega 2}  P
    \ee
    and the corresponding  canonical momenta, the Hamiltonian \p{H-Robert}
    acquires the form
      \be
      \lb{H-raznost-Robert}
      H \ =\ \frac{P_1^2 + \omega^2 X_1^2}2 - 
      \frac{P_2^2 + \omega^2 X_2^2}2 + \frac {\lambda}{4\omega} (X_1 - X_2) (X_1 + X_2)^3 \, .
       \ee
       In other words, this is the Hamiltonian \p{H-raznost} with degenerate frequencies, where an extra quartic interaction of a special form is added.
       
       A nice distinguishing feature of the system \p{H-Robert} is its exact solvability. Indeed, it involves besides $H$ another integral of motion:
        \be
        \lb{N}
        N \ =\ \frac {P^2}2 + V(x) \, .
          \ee
          The Poisson bracket $\{H, N\}$ vanishes. This allows one to find the solution analytically.
   
   The classical equations of motion are 
   \be
   \lb{eqmot-Robert}
   \ddot{x} + V'(x) \ =\ 0\,, \ \ \ \ \ \ \ \ \ \ \ \ddot{D} + V''(x) D = 0 \, .
    \ee
    The first equation is especially simple. It describes oscillations in the quartic potential
     \p{V-Robert}. The solutions are the elliptic functions whose parameters depend on the integral of motion $N$ :
      \be
      \lb{cn}
      x(t) \ =\ x_0 {\rm cn} [\Omega(t-t_0), k] 
       \ee
       where  
       \be
       \alpha = \frac {\omega^4}{\lambda N}, \ \ \ 
       \Omega = [\lambda N(4+\alpha)]^{1/4}, \ \ \ k^2 \equiv m  = \frac 12 \left[ 1 - \sqrt{\frac \alpha{4 + \alpha}} \right], \nn
        x_0 = \left( \frac N\lambda \right)^{1/4} 
       \sqrt{\sqrt{4+\alpha} - \sqrt{\alpha}}   \, .    
  \ee
  Here $k$ is the parameter of the Jacobi elliptic functions \cite{Ryzhik}
  \footnote{Recall that cn$(\tau)$ is a periodic function with the period
   \be
   \lb{period}
   4K \ =\ 4\int_0^{\pi/2} \frac {d\theta}{\sqrt{1 - k^2 \sin^2 \theta}} \, .
    \ee }

The equation for $D$ represents an elliptic variety of the Mathieu equation. Generically, it is not the simplest kind of equation, but in our case the solutions can be found in an explicit form. One of the solutions is 
 \be
 \lb{D1}
 D_1(t) \ \propto \dot{x}(t) \ \propto {\rm sn}[\Omega t, k] {\rm dn}[\Omega t, k] 
   \ee
   (we have chosen $t_0 = 0$).
   The second solution can be found from the condition that the time 
   derivative of the Wronskian $W = \dot{D}_1 D_2 - \dot{D}_2 D_1 $ vanishes. 
   We find 
   \be 
   \lb{D2}
   D_2(t) \propto \dot{x}(t) \int^t \frac {dt'}{\dot{x}^2(t')} \ \propto \  
   {\rm sn}[\Omega t, k] \, {\rm dn}[\Omega t, k]   \int^t \frac {dt'} {{\rm sn}^2 [\Omega t', k] \, {\rm dn}^2[\Omega t', k]} \, .
     \ee
     Two independent solutions \p{D1} and \p{D2} exhibit oscillatory behaviour with constant and linearly rising amplitude, correspondingly (see Fig. \ref{Drost}). This linear growth has the same nature as in \p{lin-rost} and does not represent a problem.
     
     \begin{figure}[h]
   \begin{center}
 \includegraphics[width=4 in]{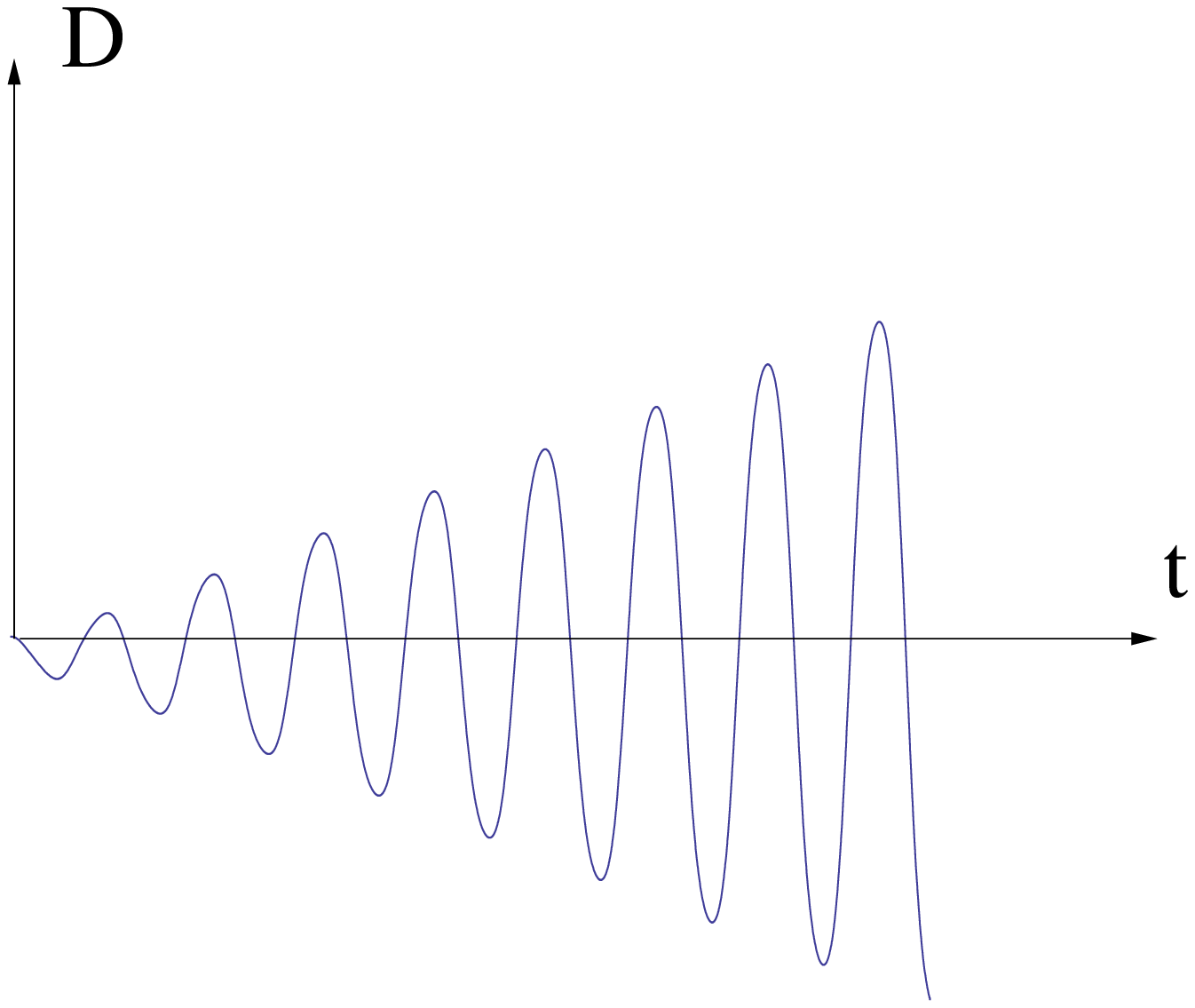}
    \end{center}
\caption{A typical behaviour of $D(t)$, as follows from the solution of \p{eqmot-Robert}.} 
\label{Drost}
\end{figure}

  \begin{figure}[h]
   \begin{center}
 \includegraphics[width=6 in]{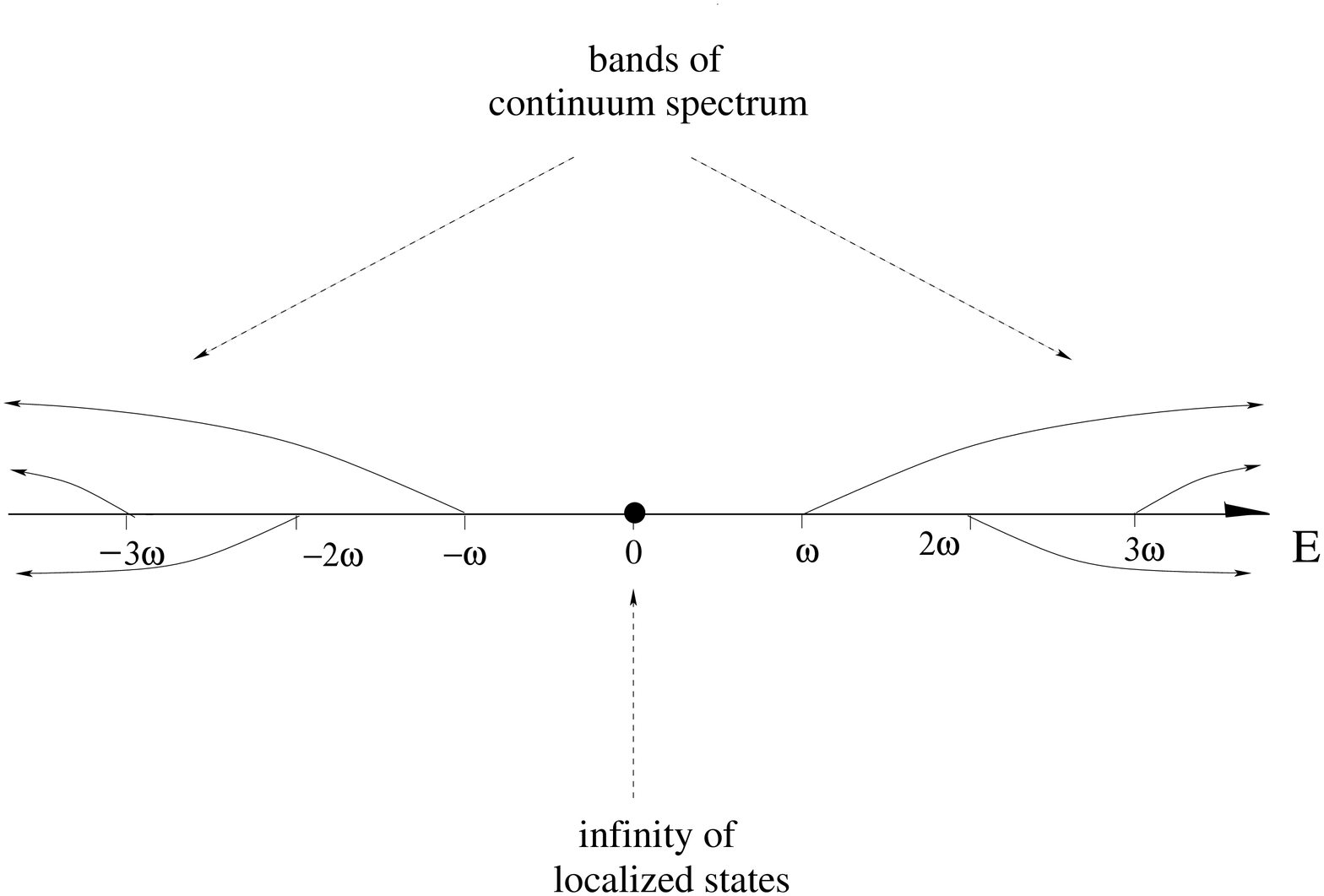}
    \end{center}
\caption{Spectrum of the Hamiltonian \p{H-Robert}.} 
\label{spectr-Robert}
\end{figure} 

The eigenvalues and eigenfunctions of the quantum Hamiltonian can also be found explicitly in this case \cite{Robert}. Most of the states belong to the continuum spectrum
(note again the similarity with the situation for the free PU oscillator with degenerate frequencies  discussed in Sect.3). There are, however, not one, but many bands.
The  bands with positive energies start from $E=\omega$, $E = 2\omega$, etc, and extend upwards while the bands with negative energies  start 
from  $E=-\omega$, $E = -2\omega$, etc, and extend downwards. The energies depend on $N$:
 \be
 \lb{bands}
 E_n(N) \ =\ \frac {\pi n }{2K(k)} \left[ \lambda N \left( 4 + \frac {\omega^4}{\lambda N}
 \right) \right]^{1/4}
  \ee
  with $n = \pm 1, \pm 2, \ldots$ and $K(k)$  defined in \p{period}.
  This means that the levels with $E \in (\omega, 2\omega)$ and $E \in (-2\omega, 
  -\omega)$ are not degenerate, the levels with  $E \in (2\omega, 3\omega)$ and $E \in (-3\omega, 
  -2\omega)$ are doubly degenerate, the levels  with  $E \in (3\omega, 4\omega)$ and $E \in (-4\omega, 
  -3\omega)$ are 3-fold  degenerate, etc. 
   The wave functions are 
     \be
     \lb{wave-fun-cont}
     \Psi_{nN}(x,D)  \propto \frac 1 {\sqrt{N - V(x)}} \exp \left\{
    i D\sqrt{2[N - V(x)]} + \frac {iE_n(N)}{\sqrt{2}} \int^x \frac {dy} {\sqrt{N - V(y)}} \right\}\, .
       \ee

If setting $n=0$ in Eq.\p{bands} and substituting this in \p{wave-fun-cont}, we obtain an infinity of zero-energy states with the wave functions
   \be
   \lb{Psi0N}
   \Psi_{0N}(x,D) \propto \frac 1 {\sqrt{N - V(x)}} \exp \left\{
    i D\sqrt{2[N - V(x)]}\right\}\, .
      \ee
      The functions \p{Psi0N} are not normalizable, but we can in fact choose  the basis with normalizable eigenfunctions. Indeed, one can show that any function
      \be
   \lb{Psi0g}
   \Psi_0^{(g)}(x,D)  \ =\ \int_{-\infty}^\infty g\left( \frac {P^2}2 + V(x) \right)
   e^{iPD} dP  
      \ee
      is a solution to the Schr\"odinger equation $\hat{H} \Psi = 0$. The solutions \p{Psi0N}
      are obtained from \p{Psi0g} by setting $g(N) = \delta(N - N_0)$. But we can also choose $g_k(N) = N^k e^{-N}$ giving the normalizable functions $\Psi_{0k}$. The full spectrum of the quantum Hamiltonian is represented in Fig. \ref{spectr-Robert}. It is Hermitian and the evolution operator is unitary.

    \begin{figure}[h]
   \begin{center}
 \includegraphics[width=5 in]{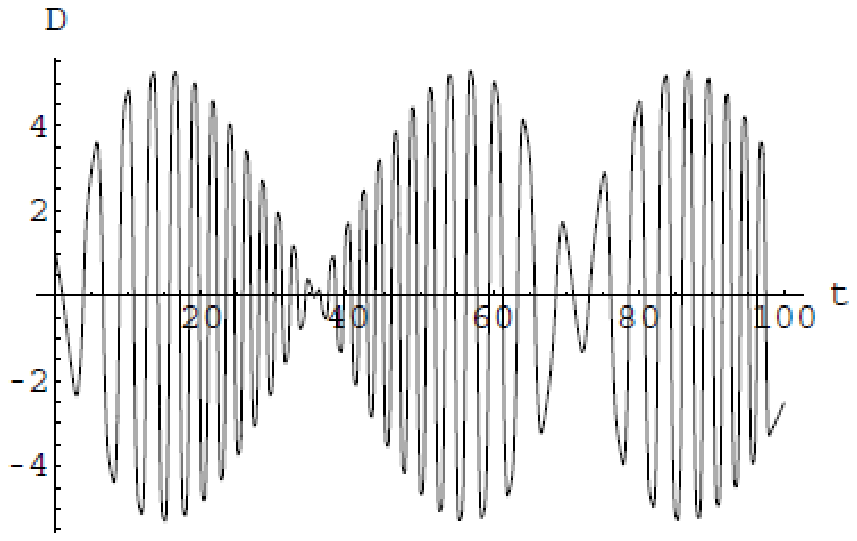}
    \end{center}
\caption{A typical trajectory for the Hamiltonian \p{H-gamma}.} 
\label{bienie}
\end{figure}

\begin{figure}[h]
   \begin{center}
 \includegraphics[width=4 in]{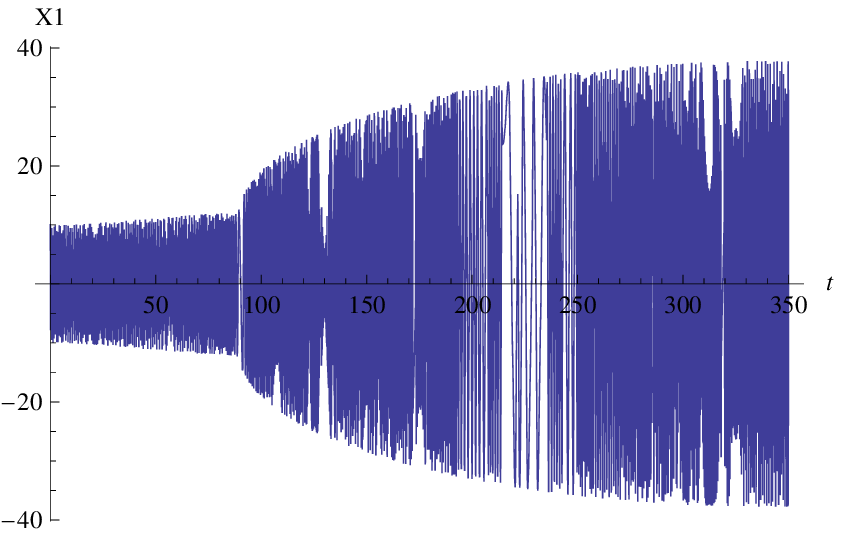}
    \end{center}
\caption{ A typical trajectory for the Hamiltonian  \p{H-raznost-nonlin-12}.} 
  \label{nevyr}
  \end{figure}

 Several other models with benign ghosts are now known. They all represent modifications  either of \p{H-Robert} or of \p{H-raznost} with some particular nonlinear interaction terms added.
  
 In Ref.\cite{Robert} a model with the Hamiltonian
  \be
   \lb{H-gamma}
   H \ =\ pP + DV'(x) - \frac \gamma 2 (D^2 + P^2) 
     \ee
  was considered [it follows from the supersymmetric quantum mechanical model with the action \p{HD-SQM}]. This model is not integrable and can only be studied numerically. This numerical study exhibits a benign behaviour of classical trajectories.
  The latter are even more handy than for the model \p{H-Robert} --- if $\gamma$ is not too large, the linear growth of the amplitudes
  of the oscillations
  is absent. \footnote{A numerical analysis shows that, if $\gamma$ exceeds some limit that depends on the values of $\omega$, $\lambda$ and of the initial conditions, the amplitude  starts to grow again, and this growth is exponential.} One observes instead a finite motion with beats (Fig. \ref{bienie}).

  Alternatively, one can modify the Hamiltonian \p{H-raznost-Robert} by allowing for 
  the frequencies in the ordinary and ghost sector to be different \cite{PU-Sigma}:
    \be
      \lb{H-raznost-nonlin-12}
      H \ =\ \frac{P_1^2 + \omega_1^2 X_1^2}2 - 
      \frac{P_2^2 + \omega_2^2 X_2^2}2 + \kappa (X_1 - X_2) (X_1 + X_2)^3 \, .
       \ee
       Also in this case the motion is finite for not too large $\kappa$. A typical trajectory is displayed in Fig. \ref{nevyr}.
        The system  \p{H-raznost-nonlin-12} behaves thus basically in the same way as 
  the nondegenerate PU oscillator [which is not surprising, bearing in mind that the Hamiltonian of the latter is equivalent to the quadratic part of the Hamiltonian  \p{H-raznost-nonlin-12}].  The fact that the classical motion is finite tells us that the quantum spectrum 
  should be pure point, involving only normalizable states. \footnote{On the other hand, the classical motion for the Hamiltonian \p{H-Robert} is infinite and that is the reason why the spectrum  in Fig. \ref{spectr-Robert} involves the continuum bands.}
   
   The same should hold for the
  Hamiltonian \p{H-gamma}. In fact, the models \p{H-gamma} and \p{H-raznost-nonlin-12} belong to the same family. The second term in \p{H-gamma} modifies the quadratic part of the Hamiltonian. By a proper canonical transformation, this quadratic part can  be brought into the form \p{H-raznost}; the interaction term would acquire then a certain complicated form. 
  
  The findings of \cite{Carroll,Robert,malicious,PU-Sigma} were confirmed in \cite{Pavsic,Ilhan}, where some other nonlinear modifications of \p{H-raznost} bringing about islands of stability, as well as the modifications leaving {\it all} the trajectories stable were suggested. The latter happens e.g. for the interactions
   \be
   V(X_1,X_2) \ =\ \lambda \sin^4(X_1+X_2)
    \ee
   (the suggestion of \cite{Pavsic}) or for
    \be
   V(X_1,X_2) \ =\  \lambda(X_1^4 - X_2^4) + \mu X_1^2 X_2^2 \, , \ \ \ \ \ \ \lambda \gg \mu
     \ee    
  (the suggestion of \cite{Ilhan}).

\section{Benign ghosts: field theory}
\setcounter{equation}0

We know today only one example of a field theory enjoying benign ghosts \cite{duhi-v-pole}. It is a straighforward generalization of \p{H-Robert}. The Lagrangian of the model
is \footnote{It is the bosonic part of the supersymmetric Lagrangian \p{act2D}.}
  \be
  \lb{L-pole}
{\cal L} \ =\ \partial_\mu \phi \partial_\mu D - D V'(\phi) \, ,
  \ee
  where $\mu = 0,1$ (the model is 2-dimensional) and 
  \be
\lb{Vphi}
 V(\phi) = \ \frac {\omega^2  \phi^2}2 + \frac {\lambda \phi^4}4 \, , \ \ \ \ \ \lambda > 0 \, .
 \ee
 The model has two integrals of motion: the energy 
  \be
\lb{E-pole}
E \ =\ \int dx \left[ \dot{\phi} \dot{D} + \partial_x \phi \, \partial_x D + D\phi(\omega^2 + \lambda \phi^2) \right]\,,
\ee
which can be both positive and negative, and the positive definite
\be
\lb{N-pole} 
N \ =\ \int dx \left\{ \frac 12 \left[ \dot{\phi}^2 + (\partial_x \phi)^2 \right] + \frac {\omega^2 \phi^2}2
+ \frac {\lambda \phi^4}4 \right\} \, .
 \ee
  With only 
two integrals of motion for the infinite number of degrees of freedom, the model is not integrable and can only be solved numerically. We did so in the case of only one spatial dimension. The equations of motion are
 \be
\lb{eqmot-pole}
\Box \phi + \omega^2 \phi + \lambda \phi^3 &=& 0 \nn
\Box D + D(\omega^2 + 3\lambda \phi^2) &=& 0 \, .
 \ee
 We see that $\phi(x,t)$ satisfies a nonlinear wave equation. The solutions to this equation cannot grow --- a growth is incompatible with the conservation of $N$.
 The amplitude of the oscillations of the field $D(x,t)$ can grow with time, however. 
\begin{figure}[ht!]

     \begin{center}

            \includegraphics[width=0.6\textwidth]{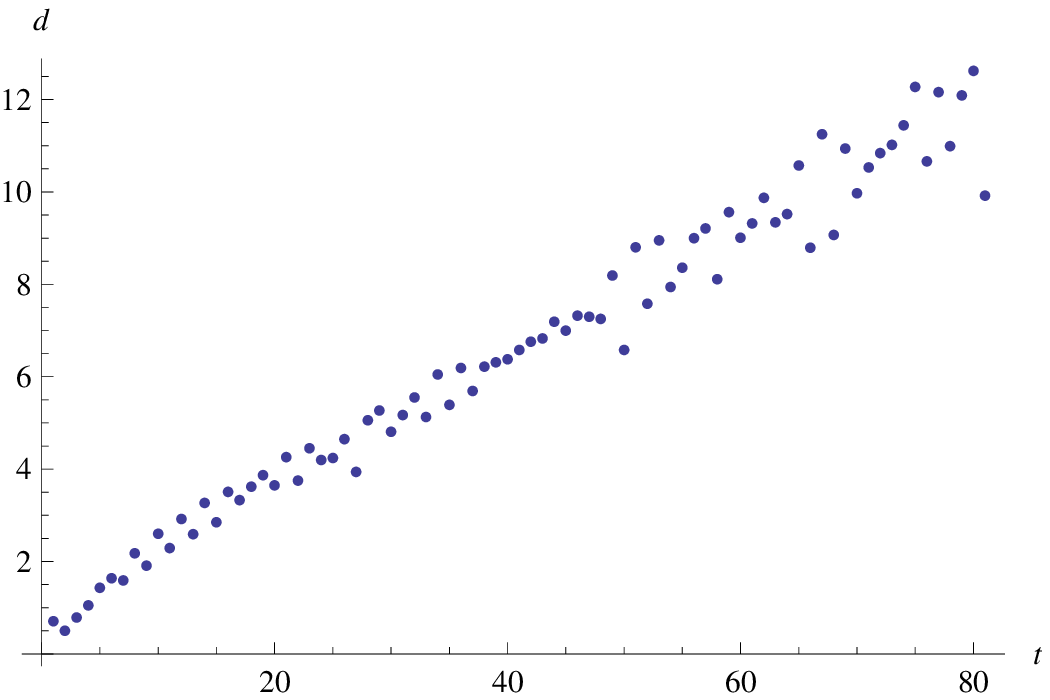}
        
    \end{center}
    \caption{Dispersion $d = \sqrt{\langle D^2 \rangle}$ as a function of time
with the initial conditions \p{incon}. }
\lb{Drost-pole}
    \end{figure}

We played with different values of the parameters $\omega, \lambda$ and with different initial conditions and never found a collapse, only  at worst  a linear growth of
$D(x,t)$  with time.
A typical behaviour is shown in Fig. \ref{Drost-pole}, where the dispersion $d(t) = \sqrt{\langle D^2 \rangle_{x}}$ is plotted as a function of time.  This particular graph corresponds to the choice 
of the parameters $\omega=\lambda = 1$, to the length of the spatial box $L = 20$
(periodic boundary conditions in the spatial direction were imposed),  while the initial conditions at $t=0$ were chosen as follows:
    \be
\lb{incon}
\phi(x,0) = 5e^{-x^2},   \ \ \ \ \  D(x,0)    = \cos (\pi x/20)  \,, \nn
\dot{\phi}(x,0) \ = \ \dot{D}(x,0) \ =\ 0 \, .
 \ee

    The dispersion undergoes stochastic fluctuations, as is 
    natural for a non-integrable nonlinear system.  However, on the average, $d(t)$
    grows linearly with time, similar to Fig. \ref{Drost}, and there is no trace of collapse.

\section{Discussion}
\setcounter{equation}0

Our main message is that the presence of ghosts (unboundness of the spectrum of the Hamiltonian) does not necessarily mean a collapse and disaster --- there are nontrivial
interacting ghost-ridden systems where this does not happen. In spite of the presence of ghosts, HD theories should be treated seriously, one should not throw them immediately 
to a garbage busket, as it was habitually done for many years. Instead, they should be interrogated one by one: whether a HD theory is benign or not (involves or does not involve the collapse) is a nontrivial dynamical question.

Unfortunately, we do not know today a lot of such  benign ghost theories, and our dream to find a HD theory which might play the role of the TOE is far from being fulfilled. All the mechanical examples considered in Sect. 6 boil down to the Hamiltonian \p{H-raznost},
supplemented by different interaction terms. These systems involve ghosts, but they are not HD systems the way they are written.
 It is true that the  Hamiltonian \p{H-raznost} is equivalent to the Hamiltonian \p{HPU} of a HD system --- the Pais-Uhlenbeck oscillator
with different frequencies. However, the nonlinear term like in \p{H-raznost-nonlin-12}
expressed in terms of the original phase space variables $(x,v, p_x, p_v)$ acquires a complicated form, and the corresponding modification of the Lagrangian \p{LPU} is still more complicated and highly nonlocal.

Simple modifications of \p{LPU} do not work. We have seen in Sect. 5 that a modified system like \p{LPU-nonlin} can have an island of stability, but this island is encircled by the sea of unstable collapsing trajectories. Most HD theories that one may try, e.g. the
6-dimensional Lagrangian \p{LD=6} and its supersymmetric generalization \cite{IVZ}, do not even have  islands of stability, only the high sea of collapse.

The only known benign field-theory system was presented and discussed in Sect. 7.
It lives in one spatial and one temporal dimension. Its classical trajectories with the initial conditions that we tried do not collapse. \footnote{One can {\it conjecture} that it is true for all the trajectories and we do so, but it is very difficult, of course, to {\it prove} this conjecture.}
But what is the quantum dynamics of the model \p{L-pole}?

This is a hard question. Ordinary field theories may be studied perturbatively. We single out the free Hamiltonian $H_0$, find its spectrum and treat nonlinear terms as perturbation.
If the spectrum of $H_0$ includes a vacuum state and particle excitations above it, one may define the scattering matrix whose elements represent transition amplitudes between the particle asymptotic states. These transition amplitudes can in many cases be found perturbatively.   

And for the system \p{L-pole}, the spectrum does not have a familiar particle interpretation. The latter works well for an ordinary field theory. When the interaction
is switched off, the Hamiltonian represents a sum of ordinary oscillator Hamiltonians for each Fourier mode. We have a conventional Fock space: the vacuum; the first excited
oscillator level corresponding to a single particle, the second level correspoding to two particle, etc.

Here this approach does not work. If the interaction is switched off, each level involves an
infinite degeneracy: an ``ordinary particle'' (an excitation of the positive-energy oscillator) has the same energy as the state with 2 ordinary particles and one ghost, 3 particles and 2 ghosts, etc. And when the interaction is present ($\lambda \neq 0$), the spectrum is radically reshuffled. If we single out a particular Fourier mode and suppress artificially its interaction with the other modes, the Hamiltonian in this sector looks similar to \p{H-raznost-Robert} and its spectrum includes continuum bands as in Fig. \ref{spectr-Robert}. It is rather clear that even if $\lambda$ is small, the spectrum of the full Hamiltonian for the system \p{L-pole} has little to do with the spectrum of the free system.
\footnote{Also in ordinary  QCD-like theories  without ghosts but with confinement the spectrum of the full theory has nothing to do with the spectrum of $H_0$. Still, in those cases {\it (i)} there is a kinematical region where the perturbative spectrum and perturbative scattering amplitudes make sense; {\it (ii)} even though the spectra of excitations for the full Hamiltonian $H$ and for the free Hamiltonian $H_0$ are very different, the ground vacuum state is well defined in both cases and the physical spectrum involves only ordinary particles carrying positive energies. It is difficult to {\it calculate} the hadron scattering amplitudes, but they can be {\it defined} and {\it measured}. } If trying to describe it in terms of particles, these particles (both the ordinary particles and the ghosts) must have a continuum 
spectrum of {\it masses}. \footnote{One can recall at this point supermembranes, where the mass spectrum is also continuous \cite{memb}.} 

Without any doubt, all this looks very strange and unusual. But strange and unusual does not mean self-contradictory and inconsistent. 
    And one can be almost sure that the TOE, whatever it is, must be strange and unusual. After all, people tried many less strange and more usual candidates for the role of the TOE, including strings, but these efforts have not been successful so far. 
    
    We have just recalled confining theories, where the spectra of the free Hamiltonian and of the full Hamiltonian are rather different. 
    Bearing this in mind, the scenario of  {\it confinement of ghosts} was discussed in the literature. The main idea is that even if the ghosts are present in the bare Hamiltonian, they may disappear from the physical spectrum due to confinement \cite{spon,Donoghue}. This conjecture was inspired by the hypothesis \cite{spon,Zee} that the fundamental gravity action is not the Einstein-Hilbert action, but the conformally invariant
    Weyl HD action
     \be
     \lb{Weyl-act}
     S^{\rm Weyl} \ \sim \ \frac 1{f^2}\int \sqrt{-g}d^4x \,  C_{\mu\nu\lambda\sigma}  C^{\mu\nu\lambda\sigma} \, ,
      \ee
      where
      \be
      \lb{Weyl-tens}
       C_{\mu\nu\lambda\sigma} \ =\  R_{\mu\nu\lambda\sigma} - \frac 12 (g_{\mu\lambda} R_{\nu\sigma} -
    g_{\mu\sigma} R_{\nu\lambda} - g_{\nu\lambda} R_{\mu\sigma} +  g_{\nu\sigma}  R_{\mu\lambda} )
    + \frac 16 (  g_{\mu\lambda} g_{\nu\sigma} - g_{\mu\sigma} g_{\nu\lambda}) 
      \ee
      is the Weyl tensor, or a supersymmetric generalization of this action.  Conformal gravity and conformal supergravities are asymptotically free \cite{Fradkin} so that classical (super)conformal symmetry is broken there by quantum effects. Like in QCD, this brings about a dimensional parameter --- the scale where the coupling constant $f$ becomes large. It is natural to assume that this scale is of order of the Planck mass $M_{Pl}$.  The EH term in the effective low-energy action along with the cosmological constant term are {\it induced} due to quantum effects by Sakharov's mechanism \cite{Sakharov,Adler}. \footnote{Nobody has an idea today how to get rid of the huge cosmological constant, which appears in all approaches. This is a separate very hard question.}
   
   The propagator of the metric following from \p{Weyl-act} behaves as
     \be
      \lb{D(k)}
      D(k) \ \sim \ \frac {f^2}{k^4} \, ,
        \ee
        which brings about the confining static potential
        \be
        \lb{pot-stat}
        V(r) \ \propto \ f^2 \int \frac {d^3 k}{k^4}\, e^{i {\small \vecg{k}} {\small \vecg{r}}} \ \propto f^2 r \, .
          \ee
          One may thus suppose that this potential confines the ghost degrees of freedom and only ordinary excitations are left in the physical spectrum.
           
           \vspace{1mm}
           
    However, 
    \begin{itemize}
    \item We now do not think that {\it any} gravity theory, regardless of whether it is the conventional Einstein's gravity or a HD gravity,
     is a part of the TOE. We argued in the Introduction that the TOE should rather be a HD theory living in a flat higher-dimensional bulk, while
    the gravity is an effective theory living on the 3-brane, which is our Universe. 
    \item As was mentioned above, the physical spectrum 
    can be very different from the spectrum of the free Hamiltonian, and many degrees of freedom that are seen in the latter
    may indeed disappear from the spectrum --- be confined. But a general theorem proven at the end of Sect. 3 and valid for {\it any} nondegenerate HD system dictates that  
    the physical spectrum of its Hamiltonian does not have a ground state and must involve ghosts. Thus,  confinement or
    any other mechanism can drastically affect the spectrum of a HD Hamiltonian, but it does not allow one to get rid of
    all the ghosts. 
    \end{itemize}

We are indebted to D. Robert and R. Woodard for illuminating discussions.

     \section*{Appendix A: Jordan blocks, unitarity and continuum spectrum.}
    \setcounter{equation}0
    \def\theequation{A.\arabic{equation}}
    We will explain here, following Refs. \cite{PU-Sigma,except}, how the Hamiltonian \p{HPU-degen} of the degenerate Pais-Uhlenbeck oscillator may be expressed as a set of Jordan blocks of infinite dimension, and why such representation does not lead to the loss of unitarity and only signalizes the presence of continuum spectrum.
    
    Consider first the nondegenerate Hamiltonian \p{HPU}. It has the spectrum \p{spec-PU}. As was mentioned, this spectrum is pure point: the eigenfunctions are normalizable. To find them explicitly, it is convenient to represent the wave function as follows \cite{duhiPL}:
    \be
 \label{Psi-nondegen}
\Psi_{nm}(x,v)\ \sim \ e^{-i\omega_1 \omega_2 xv} \exp\left\{ - \frac \Delta 2 \left( v^2 + 
\Omega_1 \Omega_2 x^2 \right) \right\} \phi_{nm}(x,v)\ ,
 \ee
where $\Delta = \omega_1 - \omega_2 >0$. The Hamiltonian acting on the function $\phi_{nm}$ reads
 \be
 \lb{Hamphi}
 \hat{H}_\phi \ =\ - \frac 12 \frac {\partial^2}{\partial v^2} + (\Delta v + i\omega_1 \omega_2 x) \frac \partial {\partial v} 
 -iv \frac \partial {\partial x} + \frac \Delta 2 \, .
   \ee
   We now introduce the variables 
    \be
    \lb{zu}
    z \ =\ \omega_1 x + iv, \ \ \ \ \ \ \ \ \ \ \ \ u \ =\ \omega_2 x - iv \, ,
     \ee
     after which the operator \p{Hamphi}  acquires the form
     \be
 \lb{Hamphi-zu}
 \hat{H}_\phi(z,u) \ =\  \frac 12 \left( \frac {\partial}{\partial z} - \frac {\partial}{\partial u} \right)^2 + 
 \omega_1 u \frac {\partial}{\partial u}  -  \omega_2 z \frac {\partial}{\partial z} 
  + \frac \Delta 2 \, .
   \ee
   The holomorphicity of $ \hat{H}_\phi(z,u)$ means that its eigenstates are holomorphic functions $\phi(z,u)$. 
   An obvious eigenfunction with the eigenvalue $\Delta/2$ is $\phi(z,u) = const$. Further, if we assume $\phi$ to be a function
   of only one holomorphic variable $u$ or $z$, the equation $\hat{H}_\phi \phi = E\phi$ acquires the same form as for the
   equation for the preexponential factor in the standard oscillator problem. Its solutions are Hermite polynomials
    \be
    \lb{Hermit}
\phi_n(u) \ =\  \ H_n[i\sqrt{\omega_1} u] \equiv H_n^+ \, , \ \ \ \ \ \ \ E_n  \ =\ \frac \Delta 2 + n \omega_1 \, , \nonumber \\
\phi_m(z) \ =\  \ H_m[\sqrt{\omega_2} z] \equiv H_n^- \, , \ \ \ \ \ \ \ E_m  \ =\ \frac \Delta 2 - m \omega_2 
 \ee
 [$H_0(w) \equiv 1, H_1(w) \equiv  2w, H_2(w) \equiv  4w^2-2, \ldots$].
 The solutions \p{Hermit} correspond to excitations of only one of the oscillators. For sure, there are also the states where both oscillators are excited. One can be directly convinced that the functions 
 \be
\label{sumHerm}
\phi_{nm}(x, v) &=& \sum_{k=0}^m \left( \frac {i\Delta }{4 \sqrt{\omega_1 \omega_2}} \right)^k 
\frac {m! (n-m)!}{(m-k)! k! (n-m+k)!} H^+_{n-m+k} H^-_k,\ \ \ \  n\geq m \ , \nonumber \\
\phi_{nm}(x, v) &=& \sum_{k=0}^n \left( \frac {i\Delta }{4 \sqrt{\omega_1 \omega_2}} \right)^k 
\frac {n! (m-n)! }{(n-k)! k! (m-n+k)!} H^+_{k} H^-_{m-n+k},\ \ \ \  m  \geq  n\, .
 \ee
 are the eigenfunctions of the operator \p{Hamphi-zu} with the eigenvalues \p{spec-PU} \cite{PU-Sigma}. Multiplying the polynomials \p{sumHerm}
 by the exponential factor as dictated by Eq.\p{Psi-nondegen}, we arrive at the normalizable wave functions for the Hamiltonian \p{HPU}. 
    
    Now let us see what happens in the limit $\Delta \to 0$. Of course, this limit is singular: the canonical transformations \p{XP-canon} are not defined at $\Delta = 0$. But it is instructive to follow the fate of the eigenfunctions \p{Psi-nondegen} in this limit.
    
    To begin with, we observe that the functions \p{Psi-nondegen} are not normalizable any more. This already indicates that the spectrum is no longer pure point, but involves continuum states.  Next we see that only the first term with $k=0$ is left in the sums \p{sumHerm}.
    We note moreover that this first term does not depend on $n$ and $m$ separately, but only on the difference $n-m$ ! We derive 
    \be
\label{Psi-degen}
 \Psi_{nm}(x,v)\ \propto \ \left[ 
\begin{array}{c}  e^{-i\omega^2 xv} H^+_{n-m},\ \ \ \ \ m \leq n \\
e^{-i\omega^2 xv} H^-_{m-n},\ \ \ \ \ m > n  \end{array} \right. .
 \ee
    
    A point in the space of parameters where different eigenfunctions of the Hamiltonian merge together is called the {\it exceptional point} \cite{Heiss}. The simplest example of such a point is given by the matrix Hamiltonian 
     \be
\label{H-Jordan}
\hat{H} = \left( \begin{array}{cc}1&1 \\ \Delta&1  \end{array} \right) 
 \ee
  If $\Delta > 0$, the matrix \p{H-Jordan} has two different real eigenvalues $\lambda_{1,2} = 1 \pm \sqrt{\Delta}$. 
  The eigenvectors are also different. But in the limit $\Delta \to 0$, these two eigenvectors merge into one,  $\psi = \left( \begin{array}{c}1 \\ 0 \end{array} \right) $. The Hilbert space where the Hamiltonian \p{H-Jordan} is allowed to  act has thus shrinked: it is now one-dimensional, rather then two-dimensional.
  And if one insists on working in the original 2-dimensional Hilbert space, the evolution is no longer unitary. Indeed, a generic solution of the Schr\"odinger equation 
     \be
\label{Schr}
i \frac {d\Psi}{dt} \ = H\Psi
\ee
is in this case 
\be
\label{solab}
\Psi(t) \ =\ a \left( \begin{array}{c}1 \\ 0 \end{array} \right)e^{-it} + 
b \left( \begin{array}{c} -it \\ 1 \end{array} \right)e^{-it} \ .
 \ee
 When $b \neq 0$, the norm of $\Psi(t)$ grows with time.
 
 In the limit $\Delta \to 0$, the matrix \p{H-Jordan} acquires a Jordan form. The specifics of the Hamiltonian \p{HPU} is, however, that the exceptional point $\omega_1 = \omega_2$ has an {\it infinite order}. It is already clear from 
 comparison \p{Psi-nondegen} and \p{Psi-degen} --- an infinite number of eigenstates are coalesced into one at $\Delta = 0$. To see the appearance of a Jordan block quite explcitly, consider for example the sector with $n-m = 0$ and
 choose the basis 
  \be
  \lb{basis}
  \Psi_k \ =\ C_k e^{-i\omega^2 xv} H_k[\sqrt{\omega}(v+i\omega x)] \, H_k[\sqrt{\omega}(\omega x + iv)] \equiv 
  C_k e^{-i\omega^2 xv} H_k^+ H_k^-
     \ee
      Now, $\Psi_0$ is an eigenstate of the Hamiltonian \p{HPU-degen} with zero energy. Using the properties of the Hermite polynomials, it is not  difficult to derive
      \footnote{In the sectors with $n-m = p \neq 0$, we can choose the basis $\Psi_k \propto H^+_{p+k} H^-_k$  for positive $p$ and the
    basis  $\Psi_k \propto H^+_k H^-_{-p+k}$ for negative $p$ and obtain a  relation similar to \p{H-na-Psik} with the extra term $\omega p \Psi_k$ on the right-hand side. }
       \be
       \lb{H-na-Psik}
       \hat{H}\Psi_k \ =\ -4ik^2 \frac {C_k}{C_{k-1}}\omega \, \Psi_{k-1} \, .
         \ee  
        
  Representing it in the matrix form and choosing the coefficients $C_k$ appropriately,  we obtain a Jordan block of infinite dimension:
  \be
  \lb{inf-Jordan}
  \hat{H} \ =\ \left( \begin{array}{ccccc}0&1&\cdots&\cdots &\cdots \\ 0&0&1& \cdots& \cdots \\ 0 & 0 & 0 & 1& \cdots \\  \cdots & \cdots & \cdots& \cdots & \cdots  \end{array} \right) \, .
   \ee

 The physics in this case is completely different compared to the case of a finite Jordan block. If the matrix \p{inf-Jordan} were truncated at any finite order, it would have only {\it one} eigenvector. But the infinite matrix \p{inf-Jordan} has an infinity of eigenvectors 
   \be
    \lb{vect-eps}
    \Psi_\epsilon \ =\  \left(\begin{array}{c} 1\\ \epsilon \\ \epsilon^2 \\ \cdots \end{array} \right)
     \ee
     with eigenvalues $\epsilon$. When time evolves, the vector \p{vect-eps} is multiplied by $e^{-it \epsilon}$ and its norm is not changed. As anticipated, our Hamiltonian has a continuous spectrum, but there are no pathologies: unitarity is preserved.  
     
     To clarify the relationship between the approach of Sect. 3, where the solutions \p{Psi-state} were obtained by a direct
     analysis of the Hamiltonian \p{HPU-degen}, and the approach of this Appendix, where the spectral problem for the degenerate Hamiltonian \p{HPU-degen} was treated as a limit of a spectral problem for the nondegenerate Hamiltonian, we expand the Bessel function in the solution \p{Psi-state} in the Taylor series and use for each term of this expansion the identity 
    \be \exp\left\{ - \frac 14 \frac {\partial^2}{\partial z^2} \right\} \, z^n
\ =\ 2^{-n} H_n(z) \,.
 \ee
 We thus derive (for positive $l$):
 \be
\label{Psilk}
\Psi_{lk} \ \propto \ e^{-it(l\omega + k^2/4)} 
e^{-i\omega^2 xv} \sum_{m=0}^\infty \left( \frac {i k^2}\omega \right)^m 
\frac {H^+_{l+m} H^-_m}{2^{4m} m!(l+m)!}\ . 
 \ee
 And this infinite series over the basis $\propto H^+_{l+m}H^-_m$ has the same meaning as the infinite column \p{vect-eps}.
 
    \section*{Appendix B: Supersymmetry and ghosts}
    \setcounter{equation}0
    \def\theequation{B.\arabic{equation}}
 
    We will elucidate here the genetic relationship of the ghost Hamiltonians \p{H-Robert}
    and \p{H-gamma} and the field theory system \p{L-pole} to certain higher-derivative supersymmetric actions.
    
    Introduce one-dimensional superspace $(t; \theta, \bar\theta)$, where $\theta$ is a complex Grassmann superpartner of time, and consider a real supervariable
      \be
      \lb{super-X}
      X(t; \theta, \bar\theta) \ =\ x(t) + \theta \bar \psi(t) + \psi(t) \bar \theta + D(t) \theta \bar\theta \, .
       \ee
       The action of the simplest supersymmetric mechanical model reads \cite{WitSQM}
\be
\label{Wit-mod}
S^{\rm Witten} \ = \ \int dt d\bar \theta d\theta \left[ \frac 12 ( \bar {\cal D} X)  ({\cal D} X) -
V(X) \right] \, ,
 \ee
 where $V(X)$ is an arbitrary function (called {\it superpotential}) and 
 \be
 \lb{covdev} 
{\cal D} = \frac \partial {\partial \theta} + i\bar \theta \frac \partial {\partial t}, \ \ \ 
\bar  {\cal D} = -\frac \partial {\partial \bar \theta} - i\theta \frac \partial {\partial t}
  \ee
are the supersymmetric covariant derivatives. Substituting \p{super-X} into \p{Wit-mod}, we derive the component Lagrangian

\be
\lb{Wit-mod-comp}
L \ =\ \frac {\dot x^2 + D^2}2 + \frac i2\left(\dot \psi \bar \psi - \psi \dot{\bar \psi} \right) -  V'(x)D  - V''(x)\bar \psi \psi  \, .
 \ee
 
 The variable $D$ enters this expression without derivatives. It thus represents an auxiliary non-dynamic variable and can be algebraically excluded. 
 
 Consider, however, the action involving an extra time derivative:
   \be
\label{S-HD-SQM}
S^{\rm HD} \ = \ \int dt d\bar \theta d\theta \left[\frac i2 ( \bar {\cal D} X) \frac d{dt} ({\cal D} X) -
V(X) \right] \, .
 \ee
The component Lagrangian then reads
\be
\label{L-HD-SQM}
L = \dot x \dot D - V'(x) D - V''(x)\bar \psi \psi + 
\dot {\bar \psi} \dot \psi \ .
 \ee
Now the time derivative $\dot{D}(t)$ enters the Lagrangian, and $D(t)$ is a genuine dynamical variable. Suppressing in \p{L-HD-SQM}  the fermion part and performing the Legendre transformation, we derive the Hamiltonian \p{H-Robert}.

The Hamiltonian \p{H-gamma} is derived in the same way from the action of the mixed
model, including the higher-derivative term and Witten's term:
\be
\label{HD-SQM-mixed}
S^{\rm mixed} \ = \ \int dt d\bar \theta d\theta \left[ \frac i2 ( \bar {\cal D} X) \frac d{dt} ({\cal D} X) + \frac \gamma 2 ( \bar {\cal D} X) ({\cal D} X) -
V(X) \right] \, .
 \ee

To derive the field theory model \p{L-pole}, we introduce a real $2D$ superfield 
$\Phi(x,t; \theta, \bar\theta)$ and write the action as 

\be
\lb{act2D}
S \ =\ \int dt dx d\bar\theta d\theta  \left[ -2i {\cal D} \Phi \partial_+ 
{\cal D} \Phi - V(\Phi) \right] \, ,
 \ee
where $\partial_\pm = (\partial_t \pm \partial_x)/2$ and 
 \be
\lb{DDbar}
{\cal D} = \frac {\partial}{\partial \theta} + i\theta \partial_-, \ \ \ \ \ \ \ \ \ \ 
\bar {\cal D} = \frac {\partial}{\partial \bar\theta} - i \bar\theta \partial_+
 \ee
are the $2D$ supersymmetric covariant derivatives.
\footnote{See the textbook \cite{West} for a good pedagogical description of the $2D$ superfield formalism. Note that the first term in \p{act2D} can as well be written as 
$-2i \bar {\cal D} \Phi \partial_- \bar {\cal D} \Phi$. The integrals over $ d\bar\theta d\theta$ of these two
expressions coincide. }
 The component Lagrangian reads 
 \be
\lb{L2D}
{\cal L} = \partial_\mu \phi  \, \partial_\mu D \, + \, \partial_\mu \bar\psi  \, \partial_\mu \psi
-  DV'(\phi) \, - \, V''(\phi) \bar \psi \psi \, .
 \ee
Suppressing the fermion part, we arrive at \p{L-pole}.

\end{document}